\documentclass[a4paper,UKenglish,cleveref, autoref, thm-restate]{socg-lipics-v2021}
\usepackage{tcolorbox}
\tcbuselibrary{skins,breakable}
\tcbset{enhanced jigsaw}

\newcommand{\eps}{\ensuremath{\varepsilon}}

\newcommand{\norm}[1]{\ensuremath{\|#1\|}}

\DeclareMathOperator*{\Prob}{\ensuremath{\textnormal{Pr}}}
\renewcommand{\Pr}{\Prob}

\newenvironment{tbox}{\begin{tcolorbox}[
		enlarge top by=5pt,
		enlarge bottom by=5pt,
		 breakable,
		 boxsep=0pt,
                  left=4pt,
                  right=4pt,
                  top=10pt,
                  arc=0pt,
                  boxrule=1pt,toprule=1pt,
                  colback=white
                  ]%%
	}
{\end{tcolorbox}}

\newcommand{\myqed}[1]{\let\qed\relax \hspace*{\fill} #1 \ensuremath{\square}}
\DeclareMathOperator{\arccosh}{arccosh}

\title{Locality Sensitive Hashing in Hyperbolic Space} %TODO Please add

\author{Chengyuan Deng}{Department of Computer Science, Rutgers University, USA}{cd751@rutgers.edu}{}{}
\author{Jie Gao}{Department of Computer Science, Rutgers University, USA}{jg1555@rutgers.edu}{[0000-0001-5083-6082]}{}
\author{Kevin Lu}{Department of Mathematics, Rutgers University, USA}{kll160@math.rutgers.edu}{}{}
\author{Feng Luo}{Department of Mathematics, Rutgers University, USA}{fluo@math.rutgers.edu}{}{}
\author{Cheng Xin}{Department of Computer Science, Rutgers University, USA}{cx122@rutgers.edu}{}{}

\authorrunning{C. Deng, J. Gao, K. Lu, F. Luo, C. Xin} %TODO mandatory. First: Use abbreviated first/middle names. Second (only in severe cases): Use first author plus 'et al.'

\Copyright{Chengyuan Deng, Jie Gao, Kevin Lu, Feng Luo and Cheng Xin} %TODO mandatory, please use full first names. LIPIcs license is "CC-BY";  http://creativecommons.org/licenses/by/3.0/

\ccsdesc[500]{Theory of computation~Computational geometry}
\ccsdesc[500]{Theory of computation~Random projections and metric embeddings}
\ccsdesc[500]{Theory of computation~Nearest neighbor algorithms}

\keywords{Locality Sensitive Hashing, Hyperbolic Geometry, Dimension Reduction, Approximate Nearest Neighbor Search} %TODO mandatory; please add comma-separated list of keywords

\category{} %optional, e.g. invited paper

\relatedversiondetails{Full Version}{URL to related version} 
\funding{The authors would like to acknowledge funding support from NSF through CCF-2118953, DMS-2311064, DMS-2220271, IIS-2229876, CNS-2515159.}%optional, to capture a funding statement, which applies to all authors. Please enter author specific funding statements as fifth argument of the \author macro.

\nolinenumbers %uncomment to disable line numbering

\EventEditors{Hee-Kap Ahn, Michael Hoffmann, and Amir Nayyeri} \EventNoEds{3} \EventLongTitle{42nd International Symposium on Computational Geometry (SoCG 2026)} \EventShortTitle{SoCG 2026} \EventAcronym{SoCG} \EventYear{2026} \EventDate{June 2--5, 2026} \EventLocation{New Brunswick, NJ, USA} \EventLogo{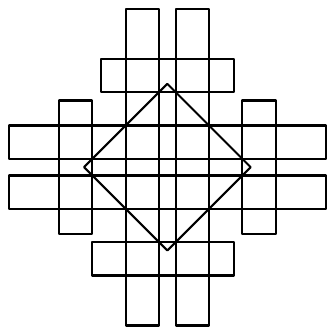} \SeriesVolume{367}
\ArticleNo{XX}     % <-- This will be filled in by the typesetters

\begin{document}

\maketitle
\begin{abstract}
For a metric space $(X, d)$, a family $\mathcal{H}$ of locality sensitive hash functions is called $(r, cr, p_1, p_2)$ sensitive if a randomly chosen function $h\in \mathcal{H}$ has probability at least $p_1$ (at most $p_2$) to map any $a, b\in X$ in the same hash bucket if $d(a, b)\leq r$ (or $d(a, b)\geq cr$). 
Locality Sensitive Hashing (LSH) is one of the most popular techniques for approximate nearest-neighbor search in high-dimensional spaces, and has been studied extensively for Hamming, Euclidean, and spherical geometries. 
An $(r, cr, p_1, p_2)$-sensitive hash function enables approximate nearest neighbor search (i.e., returning a point within distance $cr$ from a query $q$ if there exists a point within distance $r$ from $q$) with space $O(n^{1+\rho})$ and query time $O(n^{\rho})$ where $\rho=\frac{\log 1/p_1}{\log 1/p_2}$.
But LSH for hyperbolic spaces $\mathbb{H}^d$ remains largely unexplored.
%While tight bounds for $\rho$ are established for several metric spaces, such as Hamming, Euclidean, and spherical geometries, the algorithmic landscape for hyperbolic space $\mathbb{H}^d$ remains largely unexplored. 
In this work, we present the first LSH construction native to hyperbolic space. 
%The performance of LSH is characterized by the parameter $\rho$ that can be achieved as a function of the approximation factor $c$ of the nearest neighbor. 
For the hyperbolic plane $(d=2)$, we show a construction achieving $\rho \leq 1/c$, based on the hyperplane rounding scheme. For general hyperbolic spaces $(d \geq 3)$, we use dimension reduction from $\mathbb{H}^d$ to $\mathbb{H}^2$ and the 2D hyperbolic LSH to get $\rho \leq 1.59/c$. On the lower bound side, we show that the lower bound on $\rho$ of Euclidean LSH extends to the hyperbolic setting via local isometry, therefore giving $\rho \geq 1/c^2$.
\end{abstract}

\section{Introduction}
Locality Sensitive Hashing (LSH) is a hashing scheme in which similar items are more likely to hash to the same bucket. 
%is a fundamental algorithmic technique widely recognized across computational geometry and high-dimensional data analysis. 
Indyk and Motwani, in their seminal work~\cite{indyk1998approximate,v008a014}, developed LSH for general metric spaces and provided hash function constructions for the Hamming distance. The LSH framework provides an efficient data structure for the Approximate Nearest Neighbor Search (ANNS) problem, which is particularly challenging in the high-dimensional regimes. Since then, LSH has been developed for various similarity measures such as $\ell_1$~\cite{Andoni2006-vu, Linial2002-tn}, $\ell_2$~\cite{datar2004locality,andoni2008near}, $\ell_p$~\cite{datar2004locality}, cosine similarity~\cite{charikar2002similarity}, Jaccard similarity~\cite{Broder1997-ii, Broder2002-yj}, angular (spherical) distances~\cite{Terasawa2007-rw}, etc. 
With support for similarity search, data independence, mild dependence on data dimension, and simplicity in implementation, LSH serves as an algorithmic tool for classical problems such as clustering, near-duplicate detection, and efficient information retrieval for very large datasets. LSH has also been used widely in database and machine learning pipelines in industry~\cite{gionis1999similarity, buhler2001efficient, cohen2002finding, shakhnarovich2003fast, ravichandran2005randomized, das2007google,Koga2007-rs,Andoni2014-fj,Andoni2015-ur}. For a nice survey, please refer to~\cite{andoni2008near}. 

% \textcolor{red}{Feng: please feel free to modify and remove the following paragraphs. Indeed, Chengyuan is right.  Feel free to remove. Also, I wrote a few sentences at the beginning of Proof of Lemma B.2 in the appendix for integral geometry etc.  Feel free to remove those as well. }\cyd{This is really good, I just feel we do not need to say a lot about hyperbolic geometry, for a submission to SoCG.}\jie{I actually think it is OK to keep it here, not everyone is super familiar with hyperolic geometry in the SoCG community}

Hyperbolic geometry is a non-Euclidean geometry. It drops the parallel postulate of Euclidean geometry. In hyperbolic geometry, for any line and a point not on that line, there are at least two distinct lines through the point that do not intersect the original line.
Hyperbolic geometry has received increasing attention in modern data analysis because it naturally models hierarchical, tree-like, and exponentially branching structures that arise throughout science and technology. Unlike Euclidean spaces, where volume grows polynomially, hyperbolic spaces exhibit exponential volume growth, matching the combinatorial growth of hierarchies, taxonomies, knowledge graphs, and high-dimensional networks with latent tree-like structure. This geometric compatibility yields embeddings with dramatically reduced distortion: objects that are far apart in a hierarchy can be faithfully separated, and geodesics efficiently capture ancestry, flow, and community structure. As a result, hyperbolic embeddings now underpin state-of-the-art methods in network representation learning, hierarchical clustering, and discrete geometric data analysis~\cite{nickel17poincare,gulcehre2018hyperbolic,Ganea2018hyperbolic,tifrea2018poincare,De-Sa2018-cc,Chami2020-ug,Chami2019-ie, Wilson2014-qx, Karliga2004-fr, Tabaghi2024-bn}.  It is also the most important space representing the negative curvature phenomena arising from complex geometric situations~\cite{krioukov_hyperbolic_2010,Shavitt2004-au}.  Namely, according to Gromov’s hyperbolization program \cite{gromov}, spaces with complicated topology tend to resemble spaces with negative curvature. Hyperbolic geometry provides the fundamental model for negatively curved spaces, offering the canonical framework in which their curvature and geometric behavior are understood.

Moreover, the hyperbolic plane offers an unusually tractable analytic structure: its constant negative curvature provides explicit formulas for distance, geodesics, convexity, and isometries, enabling efficient optimization for machine-learning tasks. Its rich isometry group and uniformization properties are directly linked to complex analysis and conformal representations, providing tools that scale from theoretical guarantees to practical algorithms for dimensionality reduction, visualization, and manifold learning. 

%For these reasons, Hyperbolic space has recently been adopted in representation learning recently~\cite{nickel17poincare,gulcehre2018hyperbolic,Ganea2018hyperbolic,tifrea2018poincare,De-Sa2018-cc,Chami2020-ug,Chami2019-ie, Wilson2014-qx, Karliga2004-fr, Tabaghi2024-bn}.
%This is a particularly good fit when the input data has exponential growth rate -- a setting commonly seen in large scale networks. This is witnessed by the development of mathematical models  in hyperbolic space for complex networks~\cite{krioukov_hyperbolic_2010} as well as the adoption of hyperbolic embedding of large scale social networks and knowledge graphs~\cite{Chami2019-ie,Chami2020-ug}.
%https://arxiv.org/abs/1006.5169
These applications provide strong motivation for developing efficient algorithms in hyperbolic space, and in this paper, we study locality-sensitive hashing in hyperbolic space, a topic that, to our best knowledge, has not been studied before.

\subsection{LSH Framework}

We first recall the formal definitions on LSH and the Approximate Nearest Neighbor Search (ANNS) problem.

\begin{definition}[Locality Sensitive Hashing]
    Consider a metric space $(X,d)$\footnote{LSH is well-defined with a general dissimilarity measure, i.e., the triangle inequality does not necessarily hold. However, the approximate nearest neighbor search problem is typically studied in a metric space. Thus, we define LSH within the same context.}, let $U$ be a countable set and $\mathcal H$ be a family of functions from $X$ to $U$. 
    $\mathcal H$ is an $(r,cr,p_1,p_2)$-sensitive LSH family, where $r>0, c>1$, $0<p_1<p_2<1$ are real numbers, if for any two points $x,y \in X$ and a hash function $h$ chosen uniformly at random from $\mathcal H$ the following holds
\begin{align*}
d(x,y)\le r &\implies \Pr_{h\sim\mathcal H}[h(x)=h(y)]\ge p_1,\\
d(x,y)\ge cr &\implies \Pr_{h\sim\mathcal H}[h(x)=h(y)]\le p_2.
\end{align*}
\end{definition}

\begin{definition}[$(c, r)$-approximate Nearest Neighbor Search Problem]
    Let $P$ be a point set in metric space $(X,d)$. For any query point $x \in X$, if there exists a point $y \in P$ s.t. $d(x,y) \leq r$, then the algorithm returns $y' \in P$ s.t. $d(x,y') \leq c\cdot r$. 
\end{definition}
Previous works~\cite{indyk1998approximate, gionis1999similarity} have shown that the approximate nearest neighbor search problem can be solved by LSH, as formally stated below.
\begin{proposition}
    Given an $(r,cr,p_1,p_2)$-sensitive family $\mathcal{H}$, the $(c, r)$-ANNS problem can be solved using $O(n^{1+\rho})$ space, 
    %where $d$ is the dimension of the input point set,
    and $O(n^\rho\log n)$ query time, where $\rho = \rho (\mathcal{H})= \frac{\log(1/p_1)}{\log (1/p_2)}$.
\end{proposition}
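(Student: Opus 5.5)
We follow the classical Indyk--Motwani amplification argument: concatenate hash functions to drive down collisions between far points, then take several independent concatenated tables so that a near point still collides with constant probability.

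\textbf{Construction.} Fix $k = \lceil \log_{1/p_2} n \rceil$ and $L = \lceil n^{\rho}/p_1 \rceil$. For $j = 1,\dots,L$ we draw independent functions $h_{j,1},\dots,h_{j,k}$ uniformly from $\mathcal H$ and set $g_j(x) = (h_{j,1}(x),\dots,h_{j,k}(x))$. Each point of $P$ is stored in bucket $g_j(p)$ of a hash table $T_j$, using standard hashing on the countable key space $U^k$. The storage consists of the $n$ points themselves plus $L$ tables of $n$ entries each. This gives $O(n + nL) = O(n^{1+\rho})$ space, treating $1/p_1$ as a constant, with one extra $O(k)$ factor if keys are stored explicitly.

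\textbf{Query.} Given $q$, we scan the buckets $T_j[g_j(q)]$ for $j = 1,\dots,L$. We compute the distance to each retrieved point, return the first point within distance $cr$, and stop after inspecting $3L$ points in total.

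\textbf{Analysis.} We bound two events.
\begin{itemize}
\item \emph{Far points.} For a fixed $p$ with $d(p,q) \ge cr$, independence gives $\Pr[g_j(p) = g_j(q)] \le p_2^k \le 1/n$. So the expected number of far collisions in one table is at most $1$, and over all tables at most $L$. By Markov's inequality, with probability at least $2/3$ there are fewer than $3L$ far collisions.
\item \emph{Near point.} For a point $y$ with $d(q,y) \le r$, each table collides with probability at least $p_1^k$. Using $k \le \log_{1/p_2} n + 1$, we get $p_1^k \ge p_1 \cdot n^{-\log(1/p_1)/\log(1/p_2)} = p_1 n^{-\rho}$. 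Hence $y$ misses all $L$ tables with probability at most $(1 - p_1 n^{-\rho})^L \le 1/e$.
\end{itemize}

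A union bound shows that both good events hold with probability at least $1 - 1/3 - 1/e > 0$, a constant. When both hold, the scan either meets $y$ or some other point within $cr$ before exhausting the $3L$ budget, so the answer is correct.

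\textbf{Query time and boosting.} The query time is $O(L)$ hash evaluations and distance computations, each costing $k = O(\log n)$ evaluations of functions from $\mathcal H$. This gives $O(n^{\rho}\log n)$. The success probability can be boosted by a constant number of independent repetitions.

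The main subtlety is the rounding of $k$, which costs the factor $p_1$ absorbed into $L$, together with the cutoff at $3L$ inspected points that is needed to turn the expected-time bound into a worst-case one. The rest is routine.
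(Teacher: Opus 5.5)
Your proposal is correct and is exactly the classical Indyk--Motwani amplification argument: concatenate $k=\lceil\log_{1/p_2} n\rceil$ hashes, use $L=\lceil n^{\rho}/p_1\rceil$ independent tables, and cut the scan off after $3L$ candidates. The paper does not reprove this proposition; it simply cites it from Indyk--Motwani and Gionis--Indyk--Motwani. Your caveat about storing keys is harmless here, since this paper's hash values lie in $\{\pm 1\}$ and each key $g_j(x)$ is only $k=O(\log n)$ bits; as in the cited works, the $O(\cdot)$ bounds treat $1/p_1$ and $1/\log(1/p_2)$ as constants.
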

Since both the space and time complexity are determined by $\rho$, we regard $\rho$ as an important performance parameter of LSH families. Namely, we ask for the tight bounds of $\rho$ related to the approximation factor $c$ for a specific metric space.

The tradeoff between the approximation factor $c$ and the performance exponent $\rho$ is intrinsically tied to the geometry of the underlying metric space. Previous studies have established a few tight bounds, with $\rho = \Theta(1/c)$ for Hamming distances and $\rho= \Theta(1/c^2)$ for Euclidean distances~\cite{indyk1998approximate,datar2004locality,andoni2008near,motwani2006lower,o2014optimal}. For the unit sphere under angular distance, Charikar~\cite{charikar2002similarity} proposed LSH based on hyperplane rounding schemes, which are refined by subsequent work~\cite{andoni2015practical} showing $\rho =\Theta(1/c^2)$. However, LSH remains underexplored for spaces of negative curvature, such as hyperbolic space. From a theoretical perspective, this gap is notable because hyperbolic geometry exhibits exponential volume expansion, which fundamentally distinguishes itself from the polynomial growth of Euclidean space. Therefore, even extending standard space-partitioning techniques such as quadtree~\cite{kisfaludibak24quadtree} to hyperbolic geometry is non-trivial.

Last, we remark that there has also been work on improved bounds (and empirical performance) for LSH in the data-dependent setting~\cite{Andoni2014-fj,Andoni2015-ur,andoni_et_al:LIPIcs.SoCG.2016.9,andoni2018data, andoni2018holder, Jayaram2024-qv}. The focus of this paper is on the data-independent case where the hash function is oblivious to input data $P$.

% This theoretical gap is notable given that hyperbolic geometry is the canonical domain for embedding hierarchical data and complex network structures, which can suffer high distortion when forced into Euclidean or Spherical metrics. 
\subsection{Our Results and Techniques}
In this paper, we give the first LSH construction in the hyperbolic space. We start with an LSH design for points in the hyperbolic plane, and then consider LSH families for high-dimensional hyperbolic spaces.

For points in $\mathbb{H}^2$, we are inspired by the  hyperplane rounding scheme by Charikar~\cite{charikar2002similarity} for cosine similarity between vectors. The hyperplane rounding scheme partitions the space by selecting a random hyperplane passing through the origin. Each data point is assigned a bit (0 or 1) determined by the sign of its projection onto the hyperplane's normal vector. In our construction, we partition the space using a random geodesic, which is a circular arc orthogonal to the unit circle in the Poincar\'e disk model. 

A technical caveat here is that we require a well-defined measure on the set of all geodesics. Such a measure is essential to ensure that random sampling of geodesics is meaningful and invariant under hyperbolic isometries. Fortunately, the classical integral geometry provides the needed theory. With some careful crafting using Crofton's formula, we are able to show that our construction yields $\rho \leq 1/c$ in $\mathbb{H}^2$.  This is formally stated as below.

\begin{theorem}
    \label{thm:lsh-ub-2d}
    For points in the hyperbolic plane $\mathbb{H}^2$ and $r>0, c>1$, there exists an $(r, cr, p_1, p_2)$-sensitive family $\mathcal{H}$, for some $0<p_2<p_1<1$, such that $\rho(\mathcal{H}) = \frac{\log (1/p_1)}{\log (1/p_2)} \leq 1/c$. 
\end{theorem}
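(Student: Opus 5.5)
The plan is to imitate Charikar's hyperplane rounding, using a \emph{random geodesic} in place of a random hyperplane. Each geodesic $\gamma$ splits $\mathbb{H}^2$ into two half-planes, and the hash $h_\gamma(x)\in\{0,1\}$ records which side of $\gamma$ the point $x$ lies on. Then $h_\gamma(x)\neq h_\gamma(y)$ exactly when $\gamma$ crosses the geodesic segment $[x,y]$. The whole proof therefore reduces to computing the probability that a random geodesic crosses a segment of length $d(x,y)$, and then optimizing $\rho$.

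\textbf{Step 1 (the measure).} I would use the classical isometry-invariant measure $\mu$ on the space of geodesics of $\mathbb{H}^2$. In the hyperboloid or Poincar\'e model one can write it in coordinates (signed distance $t$ from a base point, angle $\theta$) as $\cosh t\,dt\,d\theta$. Crofton's formula for this measure gives two facts.
\begin{itemize}
\item The set of geodesics meeting a segment of length $L$ has measure $2L$.
\item The set of geodesics meeting a compact convex set $K$ has measure equal to the perimeter of $K$.
\end{itemize}
Since $\mu$ is infinite, I would normalize it by restricting to geodesics that meet a large closed disk $B_R$, whose perimeter is $2\pi\sinh R$. I would choose $R$ so that $B_R$ contains the data set and all queries. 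This is a bounded-domain assumption, and it is the main caveat I expect to need to address. Sampling $\gamma$ from $\mu$ restricted to $B_R$ and normalized is then a probability measure.

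\textbf{Step 2 (collision probability).} Take $x,y\in B_R$. By convexity of $B_R$, the segment $[x,y]$ lies in $B_R$, so every geodesic crossing $[x,y]$ also meets $B_R$. Hence
\[
\Pr[h(x)\neq h(y)] = \frac{2d(x,y)}{2\pi\sinh R} = \frac{d(x,y)}{L_0}, \qquad L_0:=\pi\sinh R .
\]
I would take $R$ large enough that $cr<L_0$. Collision probability is therefore exactly linear and decreasing in distance. This gives $p_1 = 1-r/L_0$ and $p_2 = 1-cr/L_0$, which satisfy $0<p_2<p_1<1$.

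\textbf{Step 3 (bounding $\rho$).} Set $f(t)=-\log(1-t)$. This function is convex on $[0,1)$ with $f(0)=0$, so $f(t/c)\le f(t)/c$ for $c>1$. Applying this with $t=cr/L_0$ gives
\[
\rho=\frac{f(r/L_0)}{f(cr/L_0)}\le \frac1c .
\]
This is the same computation as for bit-sampling LSH in Hamming space.

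\textbf{Main obstacle.} I expect the hardest part to be making Step 1 rigorous: stating and verifying Crofton's formula with the correct normalization in $\mathbb{H}^2$. I would check it directly in coordinates. By isometry invariance, place the segment on a geodesic through the base point, compute the measure of geodesics crossing it, and confirm that the measure of geodesics hitting $B_R$ is $\int_0^{2\pi}\!\int_{-R}^{R}\cosh t\,dt\,d\theta$ up to the convention factor.

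A secondary issue is the bounded-domain requirement. For queries outside $B_R$, the separation probability is still at most $d/L_0$, so close pairs remain fine. Far pairs, however, need the segment to lie inside $B_R$. So I would either state the result for inputs contained in a fixed ball, or enlarge $R$ to cover all relevant points.
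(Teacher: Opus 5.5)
Your proposal is correct and follows the paper's proof essentially step for step. It uses the same random-geodesic hash under the kinematic measure $\cosh t\,dt\,d\theta$, normalized over geodesics meeting $B(0,R)$ (total mass $2\pi\sinh R$), and the same Crofton-type fact that the geodesics separating $x,y$ have measure $2d(x,y)$, which the paper verifies in coordinates via $\tanh t=\tanh(r/2)\cos\theta$ exactly as you plan. This gives collision probability $1-d(x,y)/(\pi\sinh R)$. Your convexity argument for $-\log(1-t)$ is precisely the content of the Datar et al.\ lemma the paper cites to get $\rho\le 1/c$.

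Two points you handle explicitly that the paper leaves implicit: the requirement $cr<\pi\sinh R$, needed so that $p_2>0$, and the caveat about query points lying outside $B_R$. You treat both correctly.
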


To handle points in high-dimensional hyperbolic space, we use a random projection to low-dimensional hyperbolic space. First, we provide a refined analysis of the Johnson-Lindenstrauss type dimension-reduction result in~\cite{benjamini2009dimension}. Given a point set $P$ in a metric space $(X,d)$, we say a dimension reduction is an $(\alpha, \beta)$-approximation to dimension $k$, if there exists a mapping $f$ that projects $P$ into a $k$-dimension space $(\hat{X}, \hat{d})$, such that $\alpha \cdot d(p_i,p_j) \leq \hat{d}(f(p_i),f(p_j)) \leq \beta \cdot d(p_i,p_j)$ for any $p_i,p_j \in P$. In the literature, the embedding distortion is defined as $\beta/\alpha$.

\begin{theorem}[Informal Version of \Cref{thm:JLproof}]
    \label{thm:hyperbolic-dr}
    Let $k \in \mathbb{N}, k\geq 1$, if there exists an $(\alpha, \beta)$-approximate dimension reduction for an $n$-dimension Euclidean point set to $k$-dimension Euclidean space, then there exists an $(\alpha, \beta)$-approximate dimension reduction for a $(n+1)$-dimension hyperbolic point set to $(k+1)$-dimension hyperbolic space.
\end{theorem}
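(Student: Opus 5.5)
The plan is to work in the upper half-space model and let the Euclidean dimension reduction act only on the horizontal coordinates. Write $\mathbb{H}^{n+1}=\{(u,s): u\in\mathbb{R}^n,\ s>0\}$. The distance formula is
\[
\cosh d\bigl((u,s),(v,t)\bigr)=1+\frac{\norm{u-v}^2+(s-t)^2}{2st}.
\]
Equivalently, $d=2\,\mathrm{arcsinh}\bigl(\sqrt{Q/4}\bigr)$ with
\[
Q=\frac{\norm{u-v}^2+(s-t)^2}{st}.
\]

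Given a hyperbolic point set $P=\{(u_i,s_i)\}$, I apply the assumed Euclidean $(\alpha,\beta)$-approximate map $f$ to the horizontal parts $\{u_i\}\subset\mathbb{R}^n$, giving points in $\mathbb{R}^k$. I then define $F(u_i,s_i)=(f(u_i),s_i)\in\mathbb{H}^{k+1}$, keeping the heights unchanged. This is the Benjamini--Makarychev construction~\cite{benjamini2009dimension}. If the point set is given in another model, for instance the hyperboloid or Poincar\'e model, I first convert it to the half-space model by an isometry.

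The key comparison is between $Q$ and its image $Q'$. The height term $(s-t)^2/(st)$ is unchanged, and the horizontal term is multiplied by a factor in $[\alpha^2,\beta^2]$. Hence, provided $\alpha\le 1\le\beta$,
\[
\alpha^2 Q\le Q'\le \beta^2 Q .
\]
This is the step where a normalization is needed. Hyperbolic space has no scaling isometries, so I cannot rescale the hyperbolic side. Instead I rescale $f$ on the Euclidean side, which is free: replacing $f$ by $\lambda f$ turns $(\alpha,\beta)$ into $(\lambda\alpha,\lambda\beta)$. In the JL setting $\alpha=1-\eps$ and $\beta=1+\eps$, so the condition already holds; I will state this normalization explicitly as part of the precise form of the theorem.

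It remains to pass from $Q$ to distances. Set $g(y)=2\,\mathrm{arcsinh}(y)$ and $y=\sqrt{Q/4}$. By the previous step the new value $y'=\sqrt{Q'/4}$ satisfies $\alpha y\le y'\le\beta y$, and $g$ is increasing. Moreover $g$ is concave on $[0,\infty)$ with $g(0)=0$, so $g(\lambda y)\ge\lambda g(y)$ for $\lambda\le1$ and $g(\lambda y)\le\lambda g(y)$ for $\lambda\ge 1$. Therefore
\[
\alpha\, d\le g(\alpha y)\le d'\le g(\beta y)\le \beta\, d,
\]
which is exactly the $(\alpha,\beta)$ guarantee. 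The dimension count is immediate: $F$ maps $\mathbb{H}^{n+1}$ to $\mathbb{H}^{k+1}$.

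The main obstacle I expect is the normalization $\alpha\le1\le\beta$, together with making sure that the untouched height term does not break the two-sided bound. The concavity argument handles the nonlinearity of $\mathrm{arcsinh}$ cleanly. If the refined analysis is meant to give something sharper, for example a better distortion for large distances where $\mathrm{arcsinh}$ behaves like $\log$, I would revisit the final step. For the statement as posed, the chain of inequalities above should suffice.
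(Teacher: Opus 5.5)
Your proof is correct. Your construction (apply $f$ to the horizontal coordinates in the half-space model and keep the heights) and your normalization $\alpha\le 1\le\beta$ are exactly those of the paper's formal version, \Cref{thm:JLproof}. The difference is in how you prove the key scaling property, which the paper isolates as \Cref{lem:F}. The paper applies the mean value theorem to $F_{z_1,z_2}$ as a function of the horizontal distance. Using a derivative estimate at an intermediate point $\hat r$, it bounds the increment $F_{z_1,z_2}((1+\delta)r)-F_{z_1,z_2}(r)$ by $2\delta\tanh(\Delta/2)$, closes with $\tanh t\le t$, and then obtains the case $\gamma'\le1$ from the case $\gamma\ge1$ by inversion. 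You instead pass to $\sinh(d/2)=\sqrt{Q}/2$. This splits the argument into a linear step, where the unscaled height term is absorbed precisely because $\alpha\le1\le\beta$, and a single concavity fact for $2\operatorname{arcsinh}$ through the origin, which gives both directions at once. At bottom the two proofs use the same inequality: star-shapedness of $g(y)=2\operatorname{arcsinh}(y)$, i.e.\ $y\,g'(y)\le g(y)$, evaluated at $y=\sinh(d/2)$ is exactly $\tanh(d/2)\le d/2$. Your version is shorter, avoids the intermediate point, and shows clearly where the normalization enters; the paper's computation yields an explicit derivative estimate but no stronger conclusion. You are also right that the normalization is a genuine requirement and not a convenience. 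A vertically aligned pair ($u=v$) keeps its distance under this map, so no guarantee with $\alpha>1$ or $\beta<1$ can hold. Rescaling $f$ by $\lambda$ preserves only the distortion $\beta/\alpha$, not the pair $(\alpha,\beta)$ itself.
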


\Cref{thm:hyperbolic-dr} may be of independent interest. For example, an immediate corollary would be a Johnson-Lindenstrauss lemma for hyperbolic space. Namely, for $n$ points in a high-dimensional hyperbolic space, their pairwise distances can be preserved within $(1\pm \varepsilon)$-approximation while the dimension can be reduced to $O(\log n/\varepsilon^2)$, for $0<\varepsilon<1$.

In the context of LSH, our goal is to apply dimension reduction to obtain a projected point set in $\mathbb{H}^2$ and then employ our random geodesic partition technique. By \Cref{thm:hyperbolic-dr}, this task reduces to finding a Euclidean dimension reduction to a straight line $(k=1)$. Interestingly, this has been explored in the first Euclidean LSH construction~\cite{datar2004locality}, where they sample a projection matrix from the $p$-stable distribution (the Gaussian distribution for $p=2$) and project the input point set onto a straight line. We adopt this scheme and analyze the distance distortion in the hyperbolic plane. Formally, we obtain the following result for a general hyperbolic space.

\begin{theorem}
    \label{thm:lsh-ub-general-d}
    For points in $d$-dimensional hyperbolic space $\mathbb{H}^d$ with $d \geq 3$ and $r>0, c>1$, there exists an $(r, cr, p_1, p_2)$-sensitive family $\mathcal{H}$, for some $0<p_2<p_1<1$, such that $\rho(\mathcal{H}) = \frac{\log (1/p_1)}{\log (1/p_2)} \leq 1.59/c$. 
\end{theorem}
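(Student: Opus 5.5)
The plan is to compose a random dimension reduction $\mathbb{H}^d\to\mathbb{H}^2$ with the random-geodesic hash of \Cref{thm:lsh-ub-2d}.

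\textbf{Step 1: the projection.} Work in the hyperboloid model. Write a point as $x=(x_0,x')$ with $x'\in\mathbb{R}^{d}$ and $x_0=\sqrt{1+\norm{x'}^2}$. Draw a Gaussian vector $g\in\mathbb{R}^d$ and set
\[
f(x)=\bigl(\sqrt{1+\langle g,x'\rangle^2},\ \langle g,x'\rangle\bigr)\in\mathbb{H}^1\subset\mathbb{H}^2 ,
\]
or more generally use the construction underlying \Cref{thm:hyperbolic-dr} with $k=1$. This is exactly the Datar et al.\ $2$-stable projection lifted to hyperbolic space.

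\textbf{Step 2: image distances.} Since $\cosh d(x,y)=x_0y_0-\langle x',y'\rangle$, the image distance is a function of the Gaussian pair $(\langle g,x'\rangle,\langle g,y'\rangle)$. By the isometry invariance of $\mathbb{H}^d$, we may assume $x$ is the base point. Then $f(x)$ is the base point and $\hat d(f(x),f(y))=|\langle g,y'\rangle|$ in arcsinh coordinates. Concretely,
\[
\hat d=\bigl|\operatorname{arcsinh}\bigl(\sinh(D)\,Z\bigr)\bigr|,\qquad Z\sim N(0,1),\ D=d(x,y).
\]
Isometry invariance needs some care: a Gaussian projection is not $O(d,1)$-invariant. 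I would instead define the hash by applying a uniformly random isometry before projecting, or argue via the paper's dimension-reduction theorem that the distortion depends only on $D$.

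\textbf{Step 3: the collision probability.} The 2D family from \Cref{thm:lsh-ub-2d} separates two points at distance $t$ with probability proportional to $t$ (Crofton). After concatenating $k$ copies, or choosing a scale, the collision probability is roughly $e^{-\lambda t}$. Hence
\[
p(D)=\mathbb{E}_Z\bigl[e^{-\lambda\,\operatorname{arcsinh}(\sinh(D)|Z|)}\bigr].
\]
Choose $\lambda$ and compute the ratio $\rho=\log(1/p(r))/\log(1/p(cr))$.

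\textbf{Step 4: the key estimate (main obstacle).} We need $\rho\le 1.59/c$ uniformly over $r>0$ and $c>1$. Two regimes need checking.

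\emph{Large $D$:} here $\operatorname{arcsinh}(\sinh D\,|Z|)\approx D+\log|Z|$. The exponent is then $\lambda D$ plus a bounded shift, so $\log(1/p)\approx\lambda D-O(1)$, and $\rho$ tends to $1/c$ times a constant.

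\emph{Small $D$:} here the map is nearly linear, $\hat d\approx D|Z|$. Then $p\approx\mathbb{E}\,e^{-\lambda D|Z|}$, which behaves like the $1/c$-type bound from averaging over $|Z|$, with a constant loss.

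I expect the constant $1.59$ to come from optimizing the worst case of the ratio $\mathbb{E}[\hat d]$-type quantities, for instance
\[
\frac{\log(1/\mathbb{E}\,e^{-s|Z|})}{s}
\]
compared at $s$ and $cs$, via Jensen. Concretely, I would bound $\log(1/p(r))\le\lambda\,\mathbb{E}[\hat d(r)]$ by Jensen. I would lower bound $\log(1/p(cr))$ using a tail estimate, with $\hat d\ge$ a constant fraction of $cr$ with constant probability. Taking the ratio then yields a constant $C/c$, and I would tune $\lambda$ numerically to get $C\approx1.59$.
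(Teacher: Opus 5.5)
\textbf{There is a genuine gap, and it is in Steps 1--2.} The map you write down projects the spatial hyperboloid coordinates with a Gaussian and lands on the geodesic $\mathbb{H}^1$. It does not distort distances uniformly. The formula $\hat d=\operatorname{arcsinh}(\sinh(D)|Z|)$ holds only when one of the two points is the base point.

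Take instead $x'=ae_1$ and $y'=ae_1+be_2$, with $b$ fixed and $a\to\infty$. Then $D\to\operatorname{arccosh}(1+b^2/2)>0$. Meanwhile $\hat d=|\operatorname{arcsinh}(ag_1+bg_2)-\operatorname{arcsinh}(ag_1)|\approx |bg_2|/(a|g_1|)$ unless $|g_1|$ is tiny, so $\mathbb{E}[\hat d]/D\to 0$. Hence there is no lower bound $\mathbb{E}[\hat d]\ge\alpha D$ that is uniform over the data. As a result, $p_2$ cannot be bounded away from $1$ independently of where the points lie (or of $R$).

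Neither of your proposed repairs works:
\begin{itemize}
\item There is no uniformly random isometry of $\mathbb{H}^d$: the isometry group is non-compact and carries no invariant probability measure.
\item The paper's dimension-reduction map does not make the distortion a function of $D$ alone. There the image distance is $F_{z_1,z_2}(s|Z|)$, which depends on $z_1,z_2,s$ separately. For instance, when $s=0$ it equals $D$ deterministically.
\end{itemize}

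\textbf{The missing idea.} You do not need the distribution of $\hat d$, only two bounds that are uniform in the configuration. Both come from the upper half-space projection $(z,x)\mapsto(z,\langle a,x\rangle)$, which keeps the height.
\begin{itemize}
\item Lower bound, pointwise in $Z$: monotonicity of $F_{z_1,z_2}$ together with Lemma~\ref{lem:F} gives $\hat d\ge\min(|Z|,1)\,D$ for all $z_1,z_2>0$.
\item Upper bound on the mean: the argument of $\operatorname{arccosh}$ is affine in $Z^2$, so concavity plus $\mathbb{E}[Z^2]=1$ gives $\mathbb{E}[\hat d]\le D$.
\end{itemize}

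The 2D collision probability is exactly linear, $1-\hat d/w$, so it commutes with the expectation over $Z$. There is therefore no reason to pass to $e^{-\lambda t}$. Concatenating many geodesic cuts under a shared projection actually hurts: for large $\lambda$, the Gaussian mass near $0$ pushes the ratio toward $1$ on horizontal pairs.

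The two bounds give $p_1\ge 1-r/w$ and $p_2\le 1-\alpha cr/w$, where
\[
\alpha=\mathbb{E}[\min(|Z|,1)]=\sqrt{2/\pi}\,(1-e^{-1/2})+\Pr[|Z|\ge 1]\approx 0.63 .
\]
Lemma~\ref{lem:log-ratio} then yields $\rho\le 1/(\alpha c)\approx 1.59/c$.

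The constant comes from this full integral, not from tuning $\lambda$. Your single tail estimate, ``$\hat d\ge\kappa cr$ with probability $q$'', gives at best $q\kappa=\max_\kappa \kappa\Pr[|Z|\ge\kappa]\approx 0.34$. That corresponds to roughly $2.9/c$, not $1.59/c$.
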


The existence of this family immediately implies that a sublinear query time of $\tilde{O}(n^{\rho})$ for Hyperbolic ANNS is possible, for $c>1.59$ in dimension $d\geq 3$. To the best of our knowledge, this is the first worst-case, data-oblivious sublinear query time guarantee for the ANNS problem in high-dimensional hyperbolic space. 

%\jie{discuss additional work on nearest neighbor search in hyperbolic space: \cite{Wu2020-cs} and graph based~\cite{prokhorenkova2022graphbased}.}

We remark that prior work~\cite{kisfaludibak24quadtree} has shown \textit{Locality Sensitive Ordering (LSO)} is not possible in hyperbolic space: no small collection of total orders can make all hyperbolic neighborhoods contiguous. This impossibility stems from the exponential volume growth of $\mathbb{H}^d$, which forces many mutually distant regions to interleave along a global ordering. Therefore, a global geometric linearization is not possible for hyperbolic space. On the other hand, our results confirm that this barrier does not apply to LSH, since hashing only requires distinguishing near and far points with good \emph{probability}, which can be achieved by the randomized geodesic-partition scheme.
% Our hyperbolic geodesic-partition scheme achieves this by without any global geometric linearization. 
% Thus, while deterministic ordering-based approaches fail in hyperbolic geometry, randomized partitioning remains viable.

On the lower bound side, it is straightforward to argue that the Euclidean LSH lower bound on $\rho$ extends to the hyperbolic setting. This follows from the fact that hyperbolic space is locally Euclidean; by scaling down a ``hard'' Euclidean dataset to a sufficiently small diameter, it can be embedded into hyperbolic space with arbitrarily low distortion. Therefore, we have the following result for the hyperbolic LSH lower bound.

\begin{restatable}{theorem}{LowerBound}
    \label{thm:lsh-lb}
    Fix $d \in \mathbb{N}$,  $c>1$, then there exists $0<\tau<1$, such that any $(\tau d,c\tau d,p_1,p_2)$-sensitive hashing family $\mathcal{H}$ in $\mathbb{H}^d$, for $0<p_2 <p_1< 1$, must satisfy $\rho(\mathcal{H}) \geq 1/c^2-o_d(1)$.
\end{restatable}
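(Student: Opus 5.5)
The proof reduces to the Euclidean lower bound by exploiting the local flatness of hyperbolic space. I will invoke the known Euclidean lower bound (O'Donnell--Wu--Zhou~\cite{o2014optimal}, building on~\cite{motwani2006lower}). It is a statement about the Hamming cube $\{0,1\}^d$, or equivalently a point set in $\mathbb{R}^d$ under $\ell_2$. Pairs are $(x,y)$ with $y$ a noisy copy of $x$ (flip probability $\approx \tau$), giving Hamming distance $\approx \tau d$ and Euclidean distance $\approx\sqrt{\tau d}$. Far pairs are independent uniform $x,y$, at Hamming distance $\approx d/2$. Because $\ell_2$ is the square root of Hamming distance on the cube, an $(\sqrt{\tau d},\, c\sqrt{\tau d})$ Euclidean family translates to a Hamming family with approximation factor $c^2$, and the Hamming bound $\rho\ge 1/c'-o_d(1)$ yields $\rho\ge 1/c^2-o_d(1)$. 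The radii in the statement ($\tau d$ and $c\tau d$) should be read with this rescaling in mind. The only property of the hard instance I need is that it is a finite (or compactly supported) configuration in $\mathbb{R}^d$ with a fixed ratio of scales, for which any hash family with the prescribed collision behaviour must have $\rho\ge 1/c^2-o_d(1)$. The argument is distributional: the bound holds for the collision probabilities averaged over the random near and far pairs, with concentration ensuring the distance thresholds are met with high probability. Small slack in the thresholds is absorbed into the $o_d(1)$ term.

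The transfer has four steps.

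First, take the hard Euclidean configuration $S\subset\mathbb{R}^d$, contained in a ball of radius $R$, and scale it by a factor $\lambda>0$. Since the Euclidean bound is scale-invariant (scaling is an isometry up to relabeling $r$), the scaled set $\lambda S$ is equally hard at radii $\lambda r, \lambda cr$.

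Second, embed $\lambda S$ into $\mathbb{H}^d$ via the exponential map at a base point $o$, identifying $T_o\mathbb{H}^d\cong\mathbb{R}^d$. In the Poincaré ball one can use $x\mapsto x$ for small $x$, where the metric is $\frac{2|dx|}{1-|x|^2}$. For points within Euclidean radius $\lambda R$ of $o$, hyperbolic distances satisfy $d_{\mathbb{H}}(\exp x,\exp y)=\|x-y\|(1+O(\lambda^2R^2))$. This follows from comparison (Rauch/Toponogov) or directly from the explicit formula $\cosh d = 1+\frac{2|x-y|^2}{(1-|x|^2)(1-|y|^2)}$ after rescaling. So the distortion is $1+\delta$ with $\delta\to0$ as $\lambda\to 0$.

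Third, suppose $\mathcal H$ is $(r',cr',p_1,p_2)$-sensitive on $\mathbb{H}^d$. Choose $\lambda$ so that the hyperbolic near threshold $r'$ matches $(1+\delta)\lambda r$. The choice of the paper's $\tau$ essentially fixes this scale. Pulling $\mathcal H$ back along the embedding gives a hash family on $\lambda S$. It is $(\lambda r/(1+\delta)\cdot(1+\delta),\ \lambda c r(1-\delta)^{-1}\cdots)$-sensitive, which amounts to approximation factor $c(1+\delta)^2$ on the hard instance. The Euclidean bound then gives $\rho(\mathcal H)\ge 1/(c(1+\delta)^2)^2-o_d(1)$.

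Fourth, let $\delta\to 0$, which is possible by taking $\tau$ small enough, since $\tau$ controls the diameter. The loss is absorbed into $o_d(1)$.

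The main obstacle is the bookkeeping of scales. The Euclidean hard instance's near and far distances must map to the hyperbolic thresholds exactly as written, $\tau d$ and $c\tau d$, while the diameter stays small enough that the curvature distortion $\delta$ is negligible. With $d$ fixed, the diameter of the hard instance is $\Theta(\sqrt d)$ times the near radius. I would therefore pick $\tau$ depending on $c$ and $d$, small enough that the whole instance lies in a hyperbolic ball of radius $o(1)$. I would also verify that the average-case (distributional) form of the Euclidean lower bound tolerates the $(1\pm\delta)$ threshold perturbation. If it does not directly, I would slightly shrink the near radius and enlarge the far radius of the Euclidean instance before transferring.
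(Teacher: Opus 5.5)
Your proposal is correct and is essentially the paper's argument: shrink the Hamming cube (on which squared $\ell_2$ distance equals Hamming distance) into a region of hyperbolic space where hyperbolic distance is a $(1+\delta)$-approximate rescaling of Euclidean distance, pull the hyperbolic family back, and invoke the O'Donnell--Wu--Zhou bound at approximation factor $c^2(1+O(\delta))$. The one difference is the embedding. The paper places the cube on a horosphere $\{(z,x_i)\}$ at large height $z$ in the half-space model of $\mathbb{H}^{d+1}$ and uses $\sqrt{2x}(1-x/12)\le\arccosh(1+x)\le\sqrt{2x}$, whereas your exponential-map embedding stays in $\mathbb{H}^d$, as the statement literally requires. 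Also, because the whole cube embeds with distortion $1+\delta$, the pulled-back family is worst-case sensitive, so the Hamming bound applies as a black box and your worry about the distributional form is unnecessary.
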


Our results leave a gap on $\rho$ between $O(1/c)$ and $\Omega(1/c^2)$. Closing this gap remains a major open problem on this topic.

\textbf{Experiments.} We show the performance of the proposed LSH constructions in both the hyperbolic plane and high-dimensional spaces. We observe that the value of $\rho$ for a random point set is a lot lower than $1/c$, which suggests the potential of the hyperbolic LSH in real-world applications.

\subsection{Related Work on ANNS in Hyperbolic Spaces}

Nearest neighbor query against a given set of $n$ points in the hyperbolic plane can be solved by using a standard point location query in the hyperbolic Voronoi diagram (that can be efficiently constructed \cite{Nielsen2010-vg}). With linear storage, a nearest-neighbor query can be performed in $O(\log n)$ time. This method is limited to 2D hyperbolic space. For $n$ points in $\mathbb{H}^d$ for a constant $d$, a $(1+\varepsilon)$-approximate nearest neighbor search can be done in time $O(d^{O(d)}\log n\log(1/\varepsilon)/\varepsilon^d)$ with space $O(nd^{O(d)}\log(1/\varepsilon)/\varepsilon^d)$~\cite{kisfaludibak24quadtree}. The performance deteriorates as $d$ increases.
Similar to the Euclidean setting, LSH is typically applied in high-dimensional spaces.

Krauthgamer and Lee~\cite{Krauthgamer2006-zz} studied approximate nearest neighbor in locally
doubling, $\delta$-hyperbolic spaces~\cite{gromov}. Specifically, they considered a $\delta$-hyperbolic metric space that has local geometry of type $(\lambda, \delta/3)$ in which every ball of radius $\delta/3$ in $X$ can be covered by $\lambda$
balls of half the radius. With $O(n^2)$ storage and query time $\lambda^{O(1)}\log^2n$, an approximate nearest neighbor with an $O(\delta)$ additive error is returned. Note that this result relies on a bounded doubling parameter $\lambda$, which is not necessarily true for a general point set in hyperbolic space $\mathbb{H}^d$. 

On the practical side, Wu and Charikar~\cite{Wu2020-cs} used Euclidean nearest neighbor search oracle as a black box for ANNS in hyperbolic spaces. Prokhorenkova et. al.~\cite{prokhorenkova2022graphbased} used graph-based nearest neighbor search on a similarity graph and analyzed query time assuming that the points are uniformly distributed within a ball of radius $R$. Another relevant result is by~\cite{andoni2018data}, which shows a data-dependent LSH for general metric space. In their construction, the LSH performance depends on the \emph{cutting modulus} of the metric space.

% Upper bounds for Euclidean LSH: \cite{datar2004locality, andoni2008near}.

% Lower bounds for Euclidean LSH: \cite{motwani2006lower, o2014optimal}.

% The many applications: \cite{gionis1999similarity, buhler2001efficient, cohen2002finding, shakhnarovich2003fast,  ravichandran2005randomized, das2007google}

% Recent developments on Data-dependent LSH: \cite{andoni2014beyond, andoni2015tight}.

% More to go.

% Recently there has been increasing interest in representation learning in non-Euclidean spaces, e.g., learning modules in hyperbolic ~\cite{nickel17poincare,gulcehre2018hyperbolic,Ganea2018hyperbolic,tifrea2018poincare,De-Sa2018-cc,Chami2020-ug,Chami2019-ie}.

\section{Locality Sensitive Hashing for $\mathbb{H}^2$}
\label{sec:H2}

In this section, we show the LSH construction in $\mathbb{H}^2$ that yields \Cref{thm:lsh-ub-2d}. We adopt the Poincar\'{e} disk model for points in $\mathbb{H}^2$, defined on the unit disk $\mathbb{D} = \{z \in \mathbb{C} : |z| < 1\}$. The space is equipped with the metric $ds = \frac{2|dz|}{1-|z|^2}$, which induces the following distance function:
\[d_\mathbb{H}(u, v) = \operatorname{arccosh}\left( 1 + 2\frac{|u-v|^2}{(1-|u|^2)(1-|v|^2)} \right).
\]
Geometrically, geodesics in $\mathbb{D}$ correspond to circular arcs orthogonal to the unit circle boundary  $C_{\infty}$, with diameters passing through the origin as special cases.

%Our construction exploits this model, where random geodesics can be sampled simply by choosing random boundary points, giving us a convenient way of using these random geodesics as the hyperbolic analog of separating hyperplanes.  

The locality sensitive hash functions we use are \emph{random geodesics} in the Poincar\'{e} disk model. 
For this purpose we require a well-defined probability measure on the set of all geodesics to ensure that random sampling of geodesics is meaningful and invariant under hyperbolic isometries. 
Luckily, the classical integral geometry~\cite{Santalo2004-sx} provides the needed theory.  Integral geometry studies the measures on the spaces of all lines, or totally geodesic planes in Euclidean, hyperbolic, or spherical spaces, such that the measures are invariant under isometries.    The fundamental theorems of Poincar\'e, Cauchy and Crofton show that one can recover basic geometric quantities such as lengths, areas, or volumes of an object by averaging over the set of geodesics intersecting the object. 

In the hyperbolic plane, the isometrically invariant probability measure on the space of all geodesics is given by the \emph{kinematic measure}~\cite{Santalo2004-sx}. Take a point $q$ and a non-zero tangent vector $v$ at $q$ in the hyperbolic plane.   The polar coordinate of a geodesic $\gamma$ with respect to $(q,v)$ is a pair $(t, \theta)$ where $t$ is the hyperbolic distance from $\gamma$ to $q$ and $\theta$ is the angle of the perpendicular from $q$ to $\gamma$ with the tangent vector $v$.
%a fixed direction. a geodesic can be defined by using geodesic polar coordinate $(t, \theta)$ where $t$ is the hyperbolic distance of the geodesic to the origin and $\theta$ is the angle of the perpendicular from the origin to the geodesic with a fixed direction. 
The kinematic measure in geodesic polar coordinates $(t,\theta)$ is
\[
d\mu(t,\theta) = \cosh(t)\, dt \, d\theta.
\]
%where $t \geq 0$ is the hyperbolic radial distance and $\theta \in [0,2\pi)$ is the angular coordinate.
The remarkable feature of the above formula is that it is independent of the choices of $q$ and $v$.   % See Figure \ref{fig:largest-rho}. 
This measure reflects the fact that the volume element in hyperbolic geometry grows exponentially with $t$. 
%Note that we overload the notation $\rho$ a little bit, because it is standard in both the literature of integral geometry and locality-sensitive hashing.

%Before detailing the construction, we address a measure-theoretic issue. 
Since the space of all geodesics in $\mathbb{H}^2$ has infinite total measure, we limit our choices of geodesics in the family of all geodesics intersecting a hyperbolic disk $B(0, R)$ of radius $R$ centered at the origin. By Crofton’s formula, the measure of the set of geodesics that intersect a convex body is proportional to the perimeter of that body. Specifically, the corresponding measure is obtained by integrating the invariant element $d\mu$ over the disk:
\begin{equation}\label{ball}
\int_{0}^{2\pi}\int_{0}^{R} \cosh(t)\, dt \, d\theta = 2\pi \sinh(R).
\end{equation}
%Thus, the measure of all geodesics intersecting a hyperbolic disk of radius $R$ is given by $2\pi \sinh(R)$.
%The crucial tool here is Crofton's formula, which establishes that the measure of the set of geodesics intersecting a convex body is proportional to the perimeter of that body. Therefore, to construct a valid LSH family, we restrict our sampling to the set of geodesics that intersect a compact bounding region containing the dataset $P$. 
Now, we choose $R$ such that the disk $B(0, R)$ centered at the origin with radius $R$ contains all points of $P$. Our final bound on the performance parameter $\rho$ does not depend on $R$, and in practice, one can choose $R$ to be a sufficiently large value.
Note that in practice, hyperbolic embedding typically avoids using points that are far from the origin due to practical concerns on resolution and numerical instability~\cite{pmlr-v202-mishne23a}. 

A geodesic $\gamma$ is chosen with probability density function $\frac{\cosh(t)}{2\pi\sinh(R)}$ to be at polar coordinate $(t, \theta)$ where $t$ is taken from $[0, R]$ and $\theta$ is taken from $[0,2\pi)$. In the Poincar\'{e} disk model, geodesics appear as Euclidean circular arcs, so $\gamma$ is a circular arc with Euclidean center $c \in \mathbb{R}^2$ and radius $b>0$.
Specifically, $c=\coth(t) e^{i\theta}$ (here convert $e^{i\theta}$ from the complex plane to the Euclidean plane) and $b=1/\sinh t$.
%\textcolor{red}{Feng: this formula seems t o be incorrect.  One reason is that it implies $|c|<1$. However, a circle perpendicular to the unit circle may have its center outside the unit ball.  My computation seems to say $c=\coth(t) e^{i\theta}$, and $b=1/\sinh(t)$. Kevin, could you double check of your formula? }\jie{You are right }
This defines a hash function $\mathbb H^2 \to \{-1, 1\}$ by 
\[h_\gamma(x) = \text{sgn}\left( \norm{x - c}^2 - b^2 \right).\]
where $x$ is the Euclidean coordinate of a point in $\mathbb{H}^2$. We can also rewrite the geodesic $\gamma$ using the Minkowski model with inner product $\langle z, z \rangle =-z_1^2+z_2^2+z_3^2$.  Let the geodesic $\gamma$ be given by $\{ x \in \mathbb H^2 |  \langle x, u\rangle =0\}$ for a point $u$ in the de Sitter space $\{ y \in \mathbb R^{2,1}| \langle y, y \rangle =1\}$: 
\[u=(\sinh t, \cosh t \cos\theta, \cosh t \sin \theta).\] %\jie{check the formula of $u$ in terms of $t, \theta$?  Feng: yes it is correct.} 
Note that $(t, \theta)$ is polar coordinate with respect to the point $q=(1,0,0)$ and $v =(0,1,0)$. Then the formula for $h_{\gamma}(x)$ is $\text{sgn} \langle x, u \rangle.$ This is exactly the same formula as in the Euclidean space. The advantage of this formula over the above one is that it is intrinsic to hyperbolic geometry, i.e., invariant under hyperbolic isometry. 

With this setup, we present the hash function in $\mathbb{H}^2$ below.

\begin{tbox}
    \textbf{Input:} A point set $P \subset \mathbb{H}^2$
    \begin{enumerate}
        \item Let $R$ be a radius such that $P$ is contained in the ball $B(0, R)$.
        \item Let $\mu_R$ denote the normalized kinematic measure restricted to the set of geodesics intersecting $B(0, R)$.
        \item Sample a random geodesic $\gamma \sim \mu_R$. 
        \item  Hash a point $x \in P$ to $h_\gamma(x)\in\{+1,-1\}$ according to which side of $\gamma$ it lies on (ties have measure zero).  
    \end{enumerate}
\end{tbox}

\begin{figure}[h]
    \centering
    \includegraphics[width=0.4\linewidth]{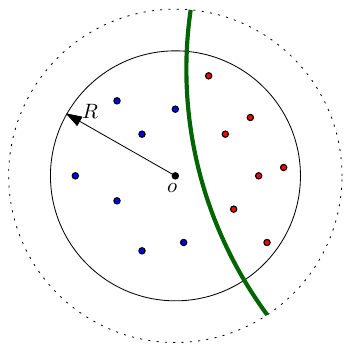}
    \caption{Illustration of the hashing scheme in $\mathbb H^2$. If two points are far, then they are more likely to be separated by a random geodesic (the green arc). Points in blue are assigned label $+1$ and points in red are assigned label $-1$.}
    \label{fig:hr-lsh}
\end{figure}

For brevity, we write $h(x)$ for $h_\gamma(x)$ when it is clear from context. \Cref{fig:hr-lsh} demonstrates how the hash function works as a separator. 
%Note that in practice it is important to concatenate multiple $h$ as the final hashing then repeat, as this process boosts the probability $p_1$ and reduce $p_2$. But it does not affect the value of $\rho$. 

The key to proving \Cref{thm:lsh-ub-2d} is a characterization of the collision probability under our hashing scheme, given in the following lemma. 

\begin{lemma}
\label{lem:collision-prob-hr}
Given two points $x,y$ in $\mathbb H^2$ of distance $d_{\mathbb H}(x,y)=r$, the kinematic measure $\mu$ of the set of all geodesics separating $x, y$ is $2 r$. In particular, if $x, y \in B(0, R)$, the ball of radius $R$ is centered at $0$,  the collision probability with respect to the normalized kinematic measure $\mu_R$ is
\begin{equation}\label{eq:collision}
\Pr[h(x)=h(y)|d_{\mathbb{H}}(x,y) = r] = 1-\frac{r}{\pi\sinh(R)}.
\end{equation}
\end{lemma}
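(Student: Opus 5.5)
Because the kinematic measure $d\mu=\cosh(t)\,dt\,d\theta$ is invariant under hyperbolic isometries and independent of the base frame $(q,v)$, I will compute the measure of the separating set in a convenient position. Apply an isometry so that $x$ and $y$ lie on a common diameter of $\mathbb D$, symmetric about the origin, so that $x=(-r/2)$ and $y=(r/2)$ along the real axis in geodesic arclength. Take $q=0$ and $v$ along this diameter.

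A geodesic $\gamma$ separates $x$ and $y$ exactly when it crosses the segment $[x,y]$ once; almost every geodesic meets a geodesic segment at most once. Set $n(\gamma)=\#(\gamma\cap[x,y])\in\{0,1\}$ a.e. Then the separating measure equals $\int n(\gamma)\,d\mu(\gamma)$, and Crofton's formula in $\mathbb H^2$ with this normalization says that $\int n\,d\mu=2\cdot\mathrm{length}$.

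To be self-contained and to confirm the constant $2$, I will verify this directly in the polar coordinates $(t,\theta)$ based at $q=0$. Using $u=(\sinh t,\cosh t\cos\theta,\cosh t\sin\theta)$, the point of the diameter at signed distance $s$ is $z=(\cosh s,\sinh s,0)$. It lies on $\gamma$ when $\langle z,u\rangle=0$, that is, $\tanh s=\tanh t/\cos\theta$. For fixed $\theta$ with $\cos\theta>0$, the condition $0<s<r/2$ amounts to $\sinh t<\tan(\cdot)$ type bounds. The cleaner route is to change variables from $(t,\theta)$ to $(s,\phi)$, where $s$ is the intersection point and $\phi$ is the intersection angle, and show that $\cosh t\,dt\,d\theta=|\sin\phi|\,ds\,d\phi$. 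This is the standard density identity in integral geometry, and proving it is where the calculation lives. Integrating over $\phi\in(0,\pi)$ gives $\int_0^\pi\sin\phi\,d\phi=2$ per unit length, so the separating measure is $2r$. The identity follows from invariance: the density $\mu$ pulled back to (point on a fixed geodesic, angle) must be translation-invariant in $s$. A local computation at $s=0$ then suffices. There $t$ is small, $\sinh t\approx$ the perpendicular distance $=s\sin\phi$-type relation, and $\theta=\phi-\pi/2$, and the Jacobian at $s=0$ gives $\sin\phi$.

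Next I treat the normalized measure $\mu_R$. If $x,y\in B(0,R)$, then by convexity of hyperbolic balls every geodesic separating $x,y$ meets $[x,y]\subset B(0,R)$, and hence meets $B(0,R)$. So the separating set lies inside the support of $\mu_R$, whose total mass is $2\pi\sinh R$ by \eqref{ball}. Therefore
\[\Pr[h(x)\ne h(y)]=\frac{2r}{2\pi\sinh R},\]
which gives \eqref{eq:collision}. For this step I must work in the frame $q=0$ (the ball center) rather than the symmetric frame, which is legitimate because the total separating measure $2r$ is frame-independent.

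The main obstacle is getting the constant right in the Jacobian identity $\cosh t\,dt\,d\theta=\sin\phi\,ds\,d\phi$. In particular, $\theta$ ranges over $[0,2\pi)$ with $t\ge0$, so each geodesic is counted once, and this must be matched with $\phi\in(0,\pi)$. I will check this by comparison with the perimeter formula: for the circle $\partial B(0,R)$, which every geodesic in the family crosses twice, the measure should be $\tfrac12\cdot2\cdot2\pi\sinh R=2\pi\sinh R$, in agreement with \eqref{ball}.
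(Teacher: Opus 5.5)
Your proof is correct, but it computes $\mu(X)=2r$ by a different route from the paper.

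\textbf{The paper's route.} The paper stays in the polar chart centred at the midpoint of $x,y$. The right-triangle relation $\tanh s=\tanh t/\cos\theta$ (the one you wrote down and then set aside) says that $\gamma=(t,\theta)$ meets $[x,y]$ iff $\tanh t\le\tanh(r/2)\,|\cos\theta|$. This is a $\tanh$ bound, not a bound of ``$\sinh t<\tan$'' type. The paper then evaluates the double integral in closed form (Claim~\ref{clm:integral}), using $\sinh(\operatorname{arctanh} a)=a/\sqrt{1-a^2}$ and the substitution $u=a\sin\theta$. Over $\theta\in[-\pi/2,\pi/2]$ this gives $2\operatorname{arsinh}(\sinh(r/2))=r$, and the other half of the $\theta$-range contributes the same.

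\textbf{Your route.} You pass to the chart given by the intersection point $s$ and the intersection angle $\phi$. Hyperbolic translations along the fixed geodesic act by $s\mapsto s+a$, fix $\phi$, and preserve $\mu$, so the pulled-back density depends only on $\phi$. You then read off $\sin\phi$ from the Jacobian at $s=0$, which yields $\cosh t\,dt\,d\theta=\sin\phi\,ds\,d\phi$.

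\textbf{Comparison.} The paper's computation is completely explicit and needs nothing about smooth charts on the space of geodesics. Yours needs no integration: it shows why the answer is linear in $r$, explains the constant as $2=\int_0^\pi\sin\phi\,d\phi$, and amounts to a proof of Crofton's formula for geodesic segments.

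\textbf{Points to tighten.} First, at $s=0$ you are at $t=0$, where the chart $t\ge 0$, $\theta\in[0,2\pi)$ is singular. Do the Jacobian either with signed $t\in\mathbb{R}$ and $\theta\in[0,\pi)$, which carries the same measure $\cosh t\,dt\,d\theta$, or as a one-sided limit $s\to 0^+$. Second, your perimeter check on $\partial B(0,R)$ uses the density $|\sin\phi|\,ds\,d\phi$ along a non-geodesic curve. Your translation argument does not establish that, so the check is only a consistency test; the constant is already fixed by the Jacobian. Your normalization step matches the paper: convexity of $B(0,R)$ gives $X\subset Y$, and $\mu(X)$ is independent of the frame.
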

\begin{proof} We use the Poincar\'{e} disk model for $\mathbb H^2$ in the calculation below.
For two points $x,y$ in $\mathbb H^2$ of distance $d_{\mathbb H}(x,y)=r$, let $X$ be the set of all geodesics in $\mathbb H^2$ separating $x,y$ and $Y$ be the set of all geodesics intersecting $B(0, R)$. Note that when $x,y\in  B(0, R)$, $X \subset Y$, i.e., $X \cap Y =X$. This shows,
\begin{equation} \Pr[h(x) \neq h(y) | d_{\mathbb H}(x,y)=r] = \frac{\mu(X \cap Y)}{\mu(Y)} =\frac{ \mu(X)}{\mu(Y)}.
\end{equation}
By \Cref{ball}, $\mu(Y) =2\pi\sinh(R)$. Next we show $\mu(X)=2r$, where $d_{\mathbb H}(x,y)=r$. 

\begin{figure}
\centering\includegraphics[width=0.4\linewidth]{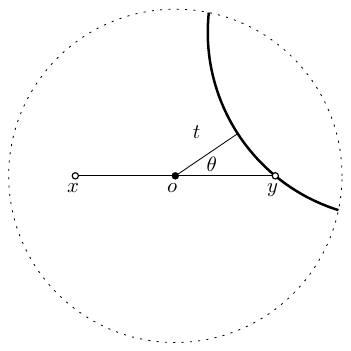}%{geo-sep-normalize.jpg}
    \caption{The geodesic intersects $y$, yielding the maximum value of $t$.}
    \label{fig:largest-rho}
\end{figure}
By the isometric invariance of the kinematic measure, to compute $\mu(X)$, 
we may assume, after applying a suitable isometry, that the midpoint of $x$ and $y$
is the origin and that both points lie on the 
$x$-axis.  Use the polar coordinate $(t, \theta)$ of a random geodesic $\gamma$ with respect to $(0, x\text{-axis})$. If $\gamma$ intersects the segment from $x$ to $y$, the furthest intersection is $x$ or $y$. \Cref{fig:largest-rho} illustrates this scenario. By the hyperbolic cosine law, we get
\[
\tanh(t) = \tanh(\frac{r}{2})\cos(\theta).
\]
It follows that $X$ in the polar coordinate is 
\[X=\{(t, \theta) | \theta\in [-\pi, \pi], 0\leq t \leq \text{arctanh}(\tanh(\frac{r}{2})\cos \theta )\}.\] Therefore,
\[\mu(X) = 2\int_{-\pi/2}^{\pi/2} \int_0^{\text{arctanh}(\tanh(\frac{r}{2})\cos\theta)} \cosh(t) \,dt \,d\theta.  \]
The double integral enjoys a very simple and elegant result, which is $r$ as shown in \Cref{clm:integral}. 
\begin{claim}
\label{clm:integral}
Let $r$ be a real number.
\[\int_{-\pi/2}^{\pi/2} \int_0^{\text{arctanh}(\tanh(r/2)\cos\theta)} \cosh(t) \,dt \,d\theta = r\]
\end{claim}

%Now for $x,y$ with distance $r$, by hyperbolic isometry we place $x,y$ symmetrically with midpoint at the origin. We compute the probability that the random geodesic $\gamma$ intersects the segment between $x,y$ (or the same, $\gamma$ separates $x,y$ in $\mathbb H^2$), i.e. $h(x) \neq h(y)$. \jie{I have one question here, we use isometry to assume $x, y$ have the origin as midpoint but with this isometry we are also changing the set of geodesics we sample from? (the disk of radius $R$ under the isometry is now some other disk?)}  \textcolor{red}{Feng: you are right. Since we did not normalize the measure $\mu_R$ in our previous version of the algorithm, the calculation (\ref{eq:collision}) was accurate with respect to the unnormalized measure. Now with the normalized measure $\mu_R$, we should be very careful.  Luckily, formula (\ref{eq:collision}) still represents the measure of the set of all lines separating $x,y$ and intersecting the ball $B$ of radius $R$.  The reason is that if $x, y$ are in $B$ and a line $L$ separates $x, y$, then $L$ intersects $B$ automatically, i.e., the location of $B$ is irrelavent as long as it contains $x,y$. }

%Let $t$ be the distance between the origin to the random geodesic $\gamma$. If $\gamma$ intersects the segment, the furthest intersection is $x$ or $y$. \Cref{fig:largest-rho} illustrates this scenario. By the hyperbolic cosine law, we get
%\[
%\tanh(t) = \tanh(\frac{r}{2})\cos(\theta)
%\]

Therefore, $\mu(X)=2r$. Furthermore, the probability that $x, y$ are separated by $\gamma$ when $\gamma$ is chosen from $Y$ is:
\[
\Pr[h(x) \neq h(y) | d_{\mathbb{H}}(x,y)=r] = \frac{2r}{2\pi\sinh(R)}%\frac{2\int_{-\pi/2}^{\pi/2} \int_0^{\text{arctanh}(\tanh(\frac{r}{2})\cos\theta)} \cosh(t) \,dt \,d\theta}{2\pi\sinh(R)}
\]
and the collision probability is:
\[
\Pr[h(x) = h(y) | d_{\mathbb{H}(x,y)}=r] =1-\frac{r}{\pi\sinh(R)}
\]
This finishes the proof.
\end{proof}
The detailed calculation of \Cref{clm:integral}  is deferred to \Cref{appendix:proof-H2}.
%, where we give more background and details on integral geometry and Crofton formula, and how we use them to derive the collision probability. 
For now we focus on the upper bound of $\rho$, note that the lemma immediately gives
\begin{align*}
    p_1 = 1-r/\pi\sinh(R),\: p_2 = 1-cr/\pi\sinh(R)
\end{align*}
In order to bound $\rho=\frac{\log 1/p_1}{\log 1/p_2}$, we apply Lemma 1 in~\cite{datar2004locality}, copied below.

\begin{lemma}[Lemma 1~\cite{datar2004locality}]\label{lem:log-ratio}
For $x\in [0, 1)$ and $\ell>1$ such that $1-\ell x>0$, \[\frac{\log (1-x)}{\log (1-\ell x)}\leq \frac{1}{\ell}.\]
\end{lemma}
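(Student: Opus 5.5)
The plan is to reduce the inequality to a concavity statement. For $x=0$ the left-hand side is $0$ and there is nothing to prove, so assume $0<x<1/\ell$.

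Since $0<1-\ell x<1-x<1$, both logarithms are negative, so both sides can be multiplied by $\log(1-\ell x)<0$ provided the inequality is reversed. The claim is then equivalent to
\[
\ell\log(1-x)\ \ge\ \log(1-\ell x), \qquad\text{that is,}\qquad (1-x)^{\ell}\ \ge\ 1-\ell x .
\]
This is Bernoulli's inequality for real exponent $\ell>1$ and $-x\ge -1$.

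To prove it, set $g(x)=\ell\log(1-x)-\log(1-\ell x)$ on $[0,1/\ell)$. Then $g(0)=0$ and
\[
g'(x)=-\frac{\ell}{1-x}+\frac{\ell}{1-\ell x}=\ell\,\frac{(1-x)-(1-\ell x)}{(1-x)(1-\ell x)}=\frac{\ell(\ell-1)x}{(1-x)(1-\ell x)}\ \ge\ 0 .
\]
So $g$ is nondecreasing on $[0,1/\ell)$ and hence $g\ge 0$ there, which gives the required inequality. Dividing back by $\log(1-\ell x)<0$ and flipping the inequality yields $\frac{\log(1-x)}{\log(1-\ell x)}\le \frac{1}{\ell}$.

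The only real obstacle is sign bookkeeping: the reversal when multiplying by $\log(1-\ell x)<0$, and the separate treatment of $x=0$, where the ratio is $0/0$ only if one also allowed $\ell x=0$. Here $\ell x=0$ forces $x=0$, and then the numerator is $0$ while $\log(1-\ell x)=\log 1=0$ as well. So $x=0$ should be excluded or interpreted as a limit. In the application $x=r/(\pi\sinh R)>0$, so this case never arises.
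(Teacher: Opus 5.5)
Your proof is correct and follows essentially the same route as the cited source (the paper itself gives no proof, only the reference to Lemma~1 of Datar et al.): use $\log(1-\ell x)<0$ to reduce to $\ell\log(1-x)\ge\log(1-\ell x)$, i.e.\ Bernoulli's inequality $(1-x)^{\ell}\ge 1-\ell x$, and prove it by monotonicity starting from $x=0$. Datar et al.\ differentiate $(1-\ell x)-(1-x)^{\ell}$ rather than your logarithmic $g$, which makes no real difference. One small inconsistency to fix: your opening line says the left-hand side is $0$ at $x=0$, whereas, as you correctly note at the end, it is the undefined form $0/0$ there, so that case must be excluded or read as the limit, which equals $1/\ell$.
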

This shows that $\rho\leq 1/c$.
%Combining \Cref{lem:collision-prob-hr} and \Cref{lem:mono-hr}, 
Thus we conclude the proof of \Cref{thm:lsh-ub-2d}.
\section{Dimension Reduction in Hyperbolic Space}\label{sec:dim-reduction}

In this section, we prove the dimension reduction result (\Cref{thm:JLproof}). We use the Poincar\'e half-space model of the hyperbolic space $\mathbb{H}^d$. Each point in 
$\mathbb{H}^d$ is represented by a pair $(z, x)$ with $z\in \mathbb{R}^+$ and $x\in \mathbb{R}^{d-1}$. The hyperbolic distance between two points $p_1=(z_1, x_1)$ and $p_2=(z_2, x_2)$ is defined by the following function
\[d_{\mathbb{H}}(p_1, p_2)=\arccosh \left (1+\frac{\|x_1-x_2\|^2+(z_1-z_2)^2}{2z_1z_2}\right ),\] where $\|x_1-x_2\|$ is the Euclidean distance between $x_1, x_2$.
For a set of $n$ points $P=\{p_i=(z_i, x_i)\}\subseteq \mathbb{H}^d$, we would like to perform dimension reduction to generate $P'=\{p'_i\}$ in $\mathbb{H}^k$ and keep the pairwise distances similar.  The method by Benjamini and Makarychev~\cite{benjamini2009dimension} is to perform dimension reduction for the Euclidean point set $X=\{x_i\} \subseteq \mathbb{R}^{d-1}$ by a function $f$ to $X'=\{x'_i\} \subseteq \mathbb{R}^{k-1}$ and define $P'=\{p'_i\}$ with $p'_i=(z_i, x'_i)$ in $\mathbb{H}^k$.
Below, we present a refined analysis that yields slightly tighter bounds on the distortion. Essentially, we show that the bound on the distortion of the Euclidean distances of $X'$ (compared to $X$) carries over to bound the distortion of the hyperbolic distance of $P'$ compared to $P$.

\begin{definition}
    Let  $f: (U, d_U) \to (V, d_V)$ be a map between two metric spaces.  If there are positive constants $\gamma_1, \gamma_2$ such that for any $u_1, u_2\in U$, \[\gamma_1 \cdot d_U(u_1, u_2)\leq d_V(f(u_1), f(u_2))\leq \gamma_2 \cdot d_U(u_1, u_2),\]
     we say that the distance stretch is upper bounded by $\gamma_2$ and lower bounded by $\gamma_1$ and the distortion is $\gamma_2/\gamma_1$.
\end{definition}

\begin{theorem}\label{thm:JLproof}
    Given $n$ points $P=\{p_i=(z_i, x_i)\}$ in $\mathbb{H}^d$. Suppose we have a function $f: \mathbb{R}^{d-1}\rightarrow \mathbb{R}^{k-1}$ for $n$ input points $X=\{x_i\}$ such that $\forall x_i, x_j\in X$, \[\gamma_1 \cdot \|x_i-x_j\|\leq \|f(x_i)- f(x_j)\|\leq \gamma_2 \cdot \|x_i- x_j\|,\] with $0<\gamma_1\leq 1$ and $\gamma_2\geq 1$. Then the stretch bounds and distortion of the map \[g(p)=g((z, x))=(z, f(x))\] are the same: $\forall p_i=(z_i, x_i), p_j=(z_j, x_j)\in P$,
    \[\gamma_1 \cdot d_{\mathbb{H}}(p_i, p_j)\leq d_{\mathbb{H}}(g(p_i), g(p_j))\leq \gamma_2 \cdot d_{\mathbb{H}}(p_i, p_j).\]
\end{theorem}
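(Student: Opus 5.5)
The plan is to fix the pair $p_i=(z_i,x_i)$, $p_j=(z_j,x_j)$ and view the hyperbolic distance as a function of the horizontal Euclidean separation alone. Put $a=\|x_i-x_j\|$, $A=1+\frac{(z_i-z_j)^2}{2z_iz_j}\ge 1$ and $B=\frac{1}{2z_iz_j}>0$, and set
\[
D(s)=\arccosh\bigl(A+Bs^2\bigr),\qquad s\ge 0 .
\]
Since $g$ leaves the $z$-coordinates unchanged, $d_{\mathbb H}(p_i,p_j)=D(a)$ and $d_{\mathbb H}(g(p_i),g(p_j))=D(a')$ with $a'=\|f(x_i)-f(x_j)\|\in[\gamma_1 a,\gamma_2 a]$. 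The function $D$ is nondecreasing in $s$, so $D(\gamma_1 a)\le D(a')\le D(\gamma_2 a)$. It therefore suffices to prove
\[
D(\lambda a)\le \lambda D(a)\ \ (\lambda\ge 1),\qquad D(\lambda a)\ge \lambda D(a)\ \ (0<\lambda\le 1).
\]
Both inequalities follow from one property: $D(s)/s$ is nonincreasing on $(0,\infty)$. The case $a=0$ is trivial, because then $a'=0$ and both distances equal $\arccosh A$.

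Concavity of $D$ is not available here. When $A>1$, $D$ is convex near $0$, so I will use star-shapedness instead. The ratio $D(s)/s$ is nonincreasing exactly when $sD'(s)\le D(s)$ for all $s>0$. Differentiating gives $D'(s)=\frac{2Bs}{\sinh D(s)}$, so the condition becomes
\[
2Bs^2\le D\sinh D,\qquad D=D(s).
\]
The key step is to eliminate $s$ through the identity $\cosh D=A+Bs^2$, which gives $2Bs^2=2(\cosh D-A)$. Using $A\ge 1$, we get $2(\cosh D-A)\le 2(\cosh D-1)$. So it is enough to establish the one-variable inequality
\[
\phi(D):=D\sinh D-2\cosh D+2\ge 0\qquad (D\ge 0).
\]
This holds because $\phi(0)=0$ and $\phi'(D)=D\cosh D-\sinh D\ge 0$. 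The last inequality is $\tanh D\le D$, which follows from $\tanh$ having slope at most $1$ and $\tanh 0=0$.

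Once $D(s)/s$ is nonincreasing, the conclusion is immediate. For $\lambda\ge1$, $\frac{D(\lambda a)}{\lambda a}\le\frac{D(a)}{a}$, and for $\lambda\le 1$ the reverse inequality holds. Combined with monotonicity of $D$ and the hypotheses $\gamma_1\le 1\le\gamma_2$, these give the two claimed bounds. The only potential obstacle is the star-shapedness step. A naive attempt via concavity of $D$, or via monotonicity of $D'$, fails when $z_i\neq z_j$, and the substitution $\cosh D=A+Bs^2$ is what turns the problem into the elementary inequality for $\phi$ above.
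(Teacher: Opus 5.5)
Your proof is correct and follows essentially the paper's route. Like the paper, you use monotonicity in the horizontal separation to reduce the theorem to the scaling inequalities $F(\lambda r)\le \lambda F(r)$ for $\lambda\ge 1$ and $F(\lambda r)\ge \lambda F(r)$ for $\lambda\le 1$, and you prove these by a derivative estimate where the identity $\cosh\Delta = 1+\frac{r^2+(z_1-z_2)^2}{2z_1z_2}$ eliminates $r$. The difference is only in packaging: the paper uses the mean value theorem with a derivative bound valid for all $\hat r\ge r$, whereas you prove $sD'(s)\le D(s)$ directly; your inequality $D\sinh D\ge 2(\cosh D-1)$ is equivalent to the paper's final step $2\tanh(\Delta/2)\le \Delta$.
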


The proof follows the main idea in~\cite{benjamini2009dimension}, but we have slightly improved analysis and bounds.
%except that instead of using inequality $\tanh t \leq \frac{3t}{1+2t}$, we can use $\tanh t \leq t$, when $t>0$.  
Specifically, the main idea is to analyze the behavior of the function
\begin{equation}\label{eq:F}
F_{z_1, z_2}(r)=\arccosh \left (1+\frac{r^2+(z_1-z_2)^2}{2z_1z_2}\right),
\end{equation} 
which is an increasing function of $r$, when $z_1, z_2>0$. For completeness, we include the full proof in Appendix~\ref{sec:JLproof}.
\begin{lemma}\label{lem:F}
For $z_1, z_2>0$, \begin{enumerate}
    \item For $\gamma\geq 1$, $F_{z_1, z_2}(\gamma \cdot r)\leq \gamma \cdot F_{z_1, z_2}(r)$.
    \item For $0<\gamma'\leq 1$, 
    $F_{z_1, z_2}(\gamma' \cdot r)\geq \gamma' \cdot F_{z_1, z_2}(r)$.
\end{enumerate}
\end{lemma}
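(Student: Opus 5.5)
The plan is to prove both parts at once by showing that $r \mapsto F_{z_1,z_2}(r)/r$ is nonincreasing on $(0,\infty)$. Given this, for $\gamma \geq 1$ and $r>0$ we get $F_{z_1,z_2}(\gamma r)/(\gamma r) \leq F_{z_1,z_2}(r)/r$, which is part 1. Part 2 follows from part 1 applied with $\gamma = 1/\gamma' \geq 1$ at the point $\gamma' r$. The case $r=0$ is trivial: both sides of each inequality reduce to comparing $F(0)=|\log(z_1/z_2)| \geq 0$ with $\gamma F(0)$ or $\gamma' F(0)$, and $\gamma F(0)\ge F(0)$, $\gamma' F(0)\le F(0)$.

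To show monotonicity of $F(r)/r$, it suffices to prove $r F'(r) \leq F(r)$ for $r>0$, since $\frac{d}{dr}\bigl(F(r)/r\bigr) = \bigl(rF'(r) - F(r)\bigr)/r^2$. Write $B = 2z_1z_2$ and $A = 1 + (z_1-z_2)^2/B \geq 1$, so that $\cosh F(r) = A + r^2/B$. Implicit differentiation gives $\sinh F \cdot F' = 2r/B$. For $r>0$ we have $F>0$, so
\[ rF'(r) = \frac{2r^2/B}{\sinh F} = \frac{2(\cosh F - A)}{\sinh F} \leq \frac{2(\cosh F - 1)}{\sinh F} = 2\tanh(F/2). \]
Here we used $A \geq 1$, and the last equality is the half-angle identity $(\cosh t - 1)/\sinh t = \tanh(t/2)$.

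Finally, the elementary inequality $\tanh s \leq s$ for $s \geq 0$, applied with $s = F/2$, gives $2\tanh(F/2) \leq F$. Hence $rF'(r) \leq F(r)$, as required.

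The only step needing care is the derivative bound. The key observation is to replace $A$ by $1$, which is exactly where $z_1 \neq z_2$ only helps, and then to recognize the half-angle tanh. Everything else is routine. This also explains the remark in the paper that the refinement over~\cite{benjamini2009dimension} comes from using $\tanh t \leq t$.
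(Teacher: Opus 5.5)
Your proof is correct and is essentially the paper's argument. Both rest on the same derivative bound from $\cosh F = 1 + (r^2+(z_1-z_2)^2)/(2z_1z_2)$: discard the $(z_1-z_2)^2$ contribution, recognize $2\tanh(F/2)$, apply $\tanh t\le t$, and deduce part 2 from part 1 via $\gamma=1/\gamma'$. The only difference is packaging: the paper applies the mean value theorem on $[r,(1+\delta)r]$ and bounds $F'(\hat r)$ using $\hat r\ge r$, whereas you prove the pointwise inequality $rF'(r)\le F(r)$ (i.e.\ $F(r)/r$ is nonincreasing), which is slightly cleaner and handles $r=0$ explicitly.
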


\begin{proof}[Proof of Theorem~\ref{thm:JLproof}]
$d_{\mathbb{H}}(g(p_i), g(p_j))=F_{z_i, z_j}(\|f(x_i)-f(x_j)\|)$.
By definition, $\gamma_1 \|x_i-x_j\|\leq \|f(x_i)-f(x_j)\|\leq \gamma_2 \|x_i-x_j\|$, with $0<\gamma_1\leq 1$ and $\gamma_2\geq 1$. 
The claim follows from $d_{\mathbb{H}}(g(p_i), g(p_j))=F_{z_i, z_j}(\|f(x_i)-f(x_j)\|)$, that $F_{z_i, z_j}(x)$ is a monotonically increasing function of $x$ and Lemma~\ref{lem:F}.
%Thus, for the upper bound, $\gamma_2\geq1$, we have $$d(g(p_i), g(p_j))=F_{z_i, z_j}(\|f(x_i)-f(x_j)\|)\leq F_{z_i, z_j}(\gamma_2 \cdot \|x_i-x_j\|) \leq \gamma_2 \cdot d(p_i,p_j).$$ The first inequality is due to that $F_{z_i, z_j}(r)$ is increasing with $r$. The second inequality follows from Lemma~\ref{lem:F}. For the lower bound side, $0<\gamma_1\leq1$, we have $$d(p_i,p_j)=F_{z_i, z_j}(\|x_i-x_j\|)\leq \frac{1}{\gamma_1} \cdot F_{z_i, z_j}(\gamma_1 \|x_i-x_j\|)\leq \frac{1}{\gamma_1} \cdot  d(g(p_i), g(p_j)).$$
%This finishes the proof.
\end{proof}

Specifically, one can apply Johnson Lindenstrauss Lemma~\cite{johnson1984extensions} on the Euclidean coordinates $\{x_i\}$ -- by random linear projection to dimension $O(\log n/\varepsilon^2)$ -- and arrive at a corresponding Johnson-Lindenstrauss Lemma in hyperbolic geometry. 

\begin{corollary}[Johnson Lindenstrauss Transform in $\mathbb{H}^d$]
For $n$ points $P=\{(z_i, x_i)\}$ in $\mathbb{H}^d$ we can project it to points $P'=\{(z_i, x'_i)\}$ in $\mathbb{H}^{k}$ with $k=\Theta(\log n/\varepsilon^2)+1$ and stretch upper and lower bounded by $1\pm \varepsilon$ respectively.
%-- by using Johnson Lindenstrauss Lemma to project $\{x_i\}$ in in $\mathbb{R}^{d-1}$ to $\{x'_i\}$ in $\mathbb{R}^{k-1}$ which achieves the stretch bounds.
\end{corollary}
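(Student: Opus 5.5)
The plan is to derive the corollary directly from Theorem~\ref{thm:JLproof} by feeding it a Euclidean Johnson--Lindenstrauss map on the horizontal coordinates. Concretely, I will take the set $X=\{x_i\}\subseteq\mathbb{R}^{d-1}$ and apply the Johnson--Lindenstrauss lemma~\cite{johnson1984extensions}. This gives a (random) linear map $f:\mathbb{R}^{d-1}\to\mathbb{R}^{k-1}$ with $k-1=\Theta(\log n/\varepsilon^2)$ such that, with positive (in fact high) probability, for all $i,j$
\[(1-\varepsilon)\|x_i-x_j\|\le \|f(x_i)-f(x_j)\|\le (1+\varepsilon)\|x_i-x_j\|.\]
I then set $x_i'=f(x_i)$ and $p_i'=(z_i,x_i')\in\mathbb{H}^k$. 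This is exactly the map $g$ of Theorem~\ref{thm:JLproof}, and $k=\Theta(\log n/\varepsilon^2)+1$ as claimed.

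Next I check the hypotheses of Theorem~\ref{thm:JLproof}. Take $\gamma_1=1-\varepsilon$ and $\gamma_2=1+\varepsilon$ with $0<\varepsilon<1$. Then $0<\gamma_1\le 1$ and $\gamma_2\ge 1$, as required. The theorem then gives, for all $i,j$,
\[(1-\varepsilon)\,d_{\mathbb{H}}(p_i,p_j)\le d_{\mathbb{H}}(p_i',p_j')\le(1+\varepsilon)\,d_{\mathbb{H}}(p_i,p_j),\]
which is the stated stretch bound.

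There are two minor points to handle. First, if $d-1\le k-1$ no reduction is needed, and one can take $f$ to be the identity. Second, the JL guarantee holds with probability at least, say, $1-1/n$, so such an $f$ exists and can be found by repeated sampling. The one substantive ingredient is Lemma~\ref{lem:F}, which ensures that the monotone function $F_{z_1,z_2}$ transfers multiplicative stretch on $r$ to the same multiplicative stretch on hyperbolic distance. Since that lemma is assumed, I expect no real obstacle here.

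If one were to prove Lemma~\ref{lem:F} itself, the approach would be as follows. Write $F(r)=\arccosh(a+br^2)$ with $a\ge 1$, and show that $F(r)/r$ is nonincreasing in $r$ on $r>0$. Both parts of the lemma follow from this. Equivalently, one can show that $\phi(s)=\arccosh(a+be^{2s})$ satisfies $\phi'(s)\le 1\cdot\phi(s)$, i.e.\ $rF'(r)\le F(r)$, which can be verified via concavity of $F$ in $r$ together with $F(0)\ge 0$. This concavity check is the only delicate computation.
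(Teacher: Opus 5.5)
Your argument is correct and is exactly the paper's: apply the Euclidean Johnson--Lindenstrauss lemma to the horizontal coordinates $\{x_i\}$, then invoke \Cref{thm:JLproof} with $\gamma_1=1-\varepsilon$ and $\gamma_2=1+\varepsilon$. One caution on your optional aside about \Cref{lem:F}: $F(r)=\arccosh(a+br^2)$ is \emph{not} concave when $a>1$ (i.e.\ $z_1\neq z_2$), because $F'(0)=0$ and $F''(0)=2b/\sqrt{a^2-1}>0$, so you would need another way to get $rF'(r)\le F(r)$, e.g.\ the direct estimate $rF'(r)\le 2\tanh(F(r)/2)\le F(r)$ that underlies the paper's proof.
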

\section{Locality Sensitive Hashing for $\mathbb{H}^d$}

A distribution $D$ is called $p$-stable~\cite{zolotarev86stable} with $p\geq 0$ if for any $n$ real numbers $v_1, v_2, \cdots, v_n$ and random independent variables $X_1, X_2, \cdots X_n$ with distribution $D$, $\sum_i v_iX_i$ has the same distribution as the variable $(\sum_i |v_i|^p)^{1/p}X$, where $X$ is a random variable with distribution $D$. 
A Gaussian distribution $\mathcal{N}(0, 1)$ is $2$-stable. This property has been used to build Euclidean LSH~\cite{datar2004locality}, to sketch high-dimensional vectors~\cite{Indyk2006-cv}, and for many other applications.

Consider $n$ points $P=\{p_i\}$ in $\mathbb{H}^d$. We use random projection to project these points to $P'=\{p'_i\}$ in $\mathbb{H}^2$ using the method in Section~\ref{sec:dim-reduction}. In particular, we take a vector $\vec{a}$ with dimension $d-1$ where each entry is independently taken from a Gaussian distribution $\mathcal{N}(0, 1)$. For $p_i=(z_i,x_i)$ in the upper half space model, we have $p_i=(z_i, x'_i)$ in $\mathbb{H}^2$, where $x'_i=\vec{a}\cdot x_i$. We then use the LSH mechanism in Section~\ref{sec:H2} to map these points into buckets. In this section, we analyze the performance of this mechanism. 

In our case,
the value $\vec{a}\cdot (p-q)$ for any two vectors $p, q\in \mathbb{R}^{d-1}$ is also a Gaussian distribution with zero mean and variance $\|p-q\|^2$, or, the distribution of $\|p-q\|Z$ with $Z\sim \mathcal{N}(0, 1)$. Take $f(z)$ to be the probability density function of the absolute value of $\mathcal{N}(0,1)$: \[f(z)=\frac{2}{\sqrt{2\pi}} e^{-z^2/2}.\] 

To calculate the probability that $p_1, p_2$ are mapped to the same bucket, we take $s=\|x_1-x_2\|$ and $r=d_{\mathbb{H}}(p_1, p_2)=F_{z_1, z_2}(s)$, where \[F_{z_1, z_2}(s)=\arccosh \left (1+\frac{s^2+(z_1-z_2)^2}{2z_1z_2}\right),\] is an increasing function of $s$. 
Suppose $p_1, p_2$ are mapped to points $p'_1, p'_2$ with $s'=\|x'_1-x'_2\|$ and $r'=d_{\mathbb{H}}(p'_1, p'_2)=F_{z_1, z_2}(s')$. $s'$ follows the distribution of $sZ$ with $Z\sim \mathcal{N}(0, 1)$. Recall that using the LSH for $\mathbb{H}^2$  the probability that $p'_1, p'_2$ map to the same bucket is $1-\frac{r'}{w}$ where $w=2\sinh R$.
Now we have,
\begin{align*}
p(r)=\text{Pr}(h(p_1)=h(p_2))&=\int_{z=0}^{\infty}f(z)\left (1-\frac{F_{z_1, z_2}(s\cdot z)}{w}\right )dz.
\end{align*}

We can get an upper bound of $p(r)$.
\begin{lemma}\label{lem:UB}
\begin{equation}\label{eq:UB}
p(r)\leq 1-\alpha \frac{r}{w},\, \alpha\approx 0.63.
\end{equation}
%\int_{z=0}^{\infty}f(z)\left (1-\frac{r\cdot z}{w}\right )dz=1-\frac{2r}{w\sqrt{2\pi}}$$
\end{lemma}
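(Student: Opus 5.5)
The bound will follow from a lower bound on the expected projected hyperbolic distance. By definition,
\[
p(r)=\int_0^\infty f(z)\Bigl(1-\tfrac{F_{z_1,z_2}(sz)}{w}\Bigr)dz = 1-\frac{1}{w}\,\mathbb{E}\bigl[F_{z_1,z_2}(s|Z|)\bigr], \qquad Z\sim\mathcal N(0,1).
\]
So it suffices to show $\mathbb{E}[F_{z_1,z_2}(s|Z|)]\ge \alpha\, F_{z_1,z_2}(s)=\alpha r$. The plan is to compare $F_{z_1,z_2}(sz)$ with $F_{z_1,z_2}(s)$ pointwise in $z$, splitting at $z=1$.

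For $0< z\le 1$ I will apply part 2 of Lemma~\ref{lem:F} with $\gamma'=z$, which gives $F_{z_1,z_2}(zs)\ge z\,F_{z_1,z_2}(s)$. For $z\ge 1$ I will only use that $F_{z_1,z_2}$ is increasing in its argument, so $F_{z_1,z_2}(zs)\ge F_{z_1,z_2}(s)$. Part 1 of the lemma bounds $F_{z_1,z_2}(zs)$ from above, so it is useless in this direction. Combining the two cases, for every $z\ge 0$,
\[
F_{z_1,z_2}(sz)\ \ge\ \min(z,1)\,F_{z_1,z_2}(s)=\min(z,1)\, r .
\]
Integrating against $f$ then gives $\mathbb{E}[F_{z_1,z_2}(s|Z|)]\ge r\,\mathbb{E}[\min(|Z|,1)]$. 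Hence $p(r)\le 1-\alpha r/w$ with $\alpha=\mathbb{E}[\min(|Z|,1)]$.

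It remains to evaluate this constant:
\[
\alpha=\int_0^1 z f(z)\,dz+\int_1^\infty f(z)\,dz=\sqrt{\tfrac{2}{\pi}}\bigl(1-e^{-1/2}\bigr)+2\bigl(1-\Phi(1)\bigr).
\]
Numerically, $\sqrt{2/\pi}\,(1-e^{-1/2})\approx 0.7979\cdot 0.3935\approx 0.314$ and $2(1-\Phi(1))\approx 0.317$, so $\alpha\approx 0.631$, matching the claimed $0.63$.

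The main conceptual point is the choice of the pointwise bound. The naive linear bound $F(sz)\ge zF(s)$ only holds for $z\le 1$ and fails for large $z$, because $F$ grows only logarithmically there. Truncating at $z=1$ is what gives a clean distribution-free constant. The only routine care needed is that the bound is independent of $z_1,z_2$ and $R$, which it is, since Lemma~\ref{lem:F} holds uniformly for all $z_1,z_2>0$.
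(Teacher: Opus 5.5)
Your proposal is correct and follows the paper's proof essentially verbatim. You split at $z=1$, use part 2 of Lemma~\ref{lem:F} on $(0,1]$ and monotonicity of $F_{z_1,z_2}$ on $[1,\infty)$, and obtain $\alpha=\sqrt{2/\pi}\,(1-e^{-1/2})+2(1-\Phi(1))\approx 0.63$; writing this as the single pointwise bound $F_{z_1,z_2}(sz)\ge \min(z,1)\,r$, so that $\alpha=\mathbb{E}[\min(|Z|,1)]$, is just a tidier packaging of the same two integrals.
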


\begin{proof}
    We first rewrite:
\[ \int_{z=0}^{\infty}f(z)\left (1-\frac{F_{z_1, z_2}(s\cdot z)}{w}\right )dz = 1-\int_{z=0}^{1}f(z)\frac{F_{z_1, z_2}(s\cdot z)}{w}dz - \int_{z=1}^{\infty}f(z)\frac{F_{z_1, z_2}(s\cdot z)}{w}dz\]
For the first integral, by Lemma~\ref{lem:F}, $F_{z_1, z_2}(s\cdot z)\geq z \cdot F_{z_1, z_2}(s)=zr$ in the range of $0<z\leq 1$.
\[\int_{z=0}^{1}f(z)\frac{F_{z_1, z_2}(s\cdot z)}{w}dz \geq \int_{z=0}^{1}f(z)z \frac{r}{w}dz =\sqrt{\frac{2}{\pi}}(1-e^{-1/2})\frac{r}{w}\approx .31\frac{r}{w}\]
For the second integral, $z\geq 1$, $F_{z_1, z_2}(s\cdot z)\geq F_{z_1, z_2}(s)=r$.
\[\int_{z=1}^{\infty}f(z)\frac{F_{z_1, z_2}(s\cdot z)}{w}dz \geq \int_{z=1}^{\infty}f(z)\frac{r}{w}dz \approx .32 \frac{r}{w}\]
Adding them together gives us:
\[p(z) \leq 1-.63\frac{r}{w}\]
\end{proof}

We also have a lower bound of $p(r)$.
\begin{lemma}\label{lem:LB}
\begin{equation}\label{eq:LB}
p(r)\geq 1-\frac{r}{w}.
\end{equation}
\end{lemma}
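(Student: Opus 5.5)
The claim is equivalent to $\mathbb{E}[F_{z_1,z_2}(sZ)]\le F_{z_1,z_2}(s)=r$ for $Z$ with density $f$, because $\int_0^\infty f(z)\,dz=1$ gives
\[
p(r)=1-\frac{1}{w}\int_0^\infty f(z)F_{z_1,z_2}(sz)\,dz .
\]
I first checked whether Lemma~\ref{lem:F} alone suffices, by splitting at $z=1$ as in the proof of Lemma~\ref{lem:UB}. On $z\le 1$, monotonicity gives $F(sz)\le r$; on $z\ge1$, Lemma~\ref{lem:F} gives $F(sz)\le zr$. This yields only
\[
\mathbb{E}[F(sZ)]\le r\left(\Pr[Z\le 1]+\mathbb{E}[Z\,\mathbf 1_{Z\ge 1}]\right)\approx 1.17\,r,
\]
which is not enough. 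So I will use a convexity argument instead.

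\textbf{Key step.} Write $F_{z_1,z_2}(s)=G(s^2)$, where
\[
G(u)=\arccosh\!\left(1+\frac{u+(z_1-z_2)^2}{2z_1z_2}\right),
\]
which is $\arccosh$ composed with an affine, increasing function of $u\ge0$. On $(1,\infty)$, $\arccosh$ has second derivative $-x/(x^2-1)^{3/2}<0$, so it is concave. A concave function composed with an affine map is concave, so $G$ is concave on $(0,\infty)$. The boundary point where the argument equals $1$, i.e.\ $u=0$ with $z_1=z_2$, is handled by continuity, since $\arccosh$ is still concave on $[1,\infty)$.

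\textbf{Applying Jensen.} With $Z$ distributed as $|\mathcal N(0,1)|$ we have $\mathbb{E}[Z^2]=1$. Jensen's inequality for the concave $G$ then gives
\[
\mathbb{E}[F_{z_1,z_2}(sZ)]=\mathbb{E}[G(s^2Z^2)]\le G\!\left(s^2\,\mathbb{E}[Z^2]\right)=G(s^2)=r .
\]
Substituting into the expression for $p(r)$ yields $p(r)\ge 1-r/w$.

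\textbf{Remaining check.} I still need integrability of $F(sZ)$, which holds because $F$ grows like $2\log s$ and the Gaussian tail is integrable. I also need the collision formula $1-r'/w$ to apply to the projected points, which I take as given, as the paper does, by choosing $R$ large enough that the projected points lie in $B(0,R)$.

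The main obstacle was recognizing that the naive Lemma~\ref{lem:F} splitting loses a constant factor. The fix is to change variables to $u=s^2$, where concavity holds and the second moment of the Gaussian is exactly $1$.
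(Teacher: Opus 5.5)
Your proof is correct and is essentially the paper's argument. The paper applies the supporting-tangent inequality for the concave $\arccosh$ at $x_0=1+\frac{s^2+(z_1-z_2)^2}{2z_1z_2}$, evaluated at the point $1+\frac{s^2z^2+(z_1-z_2)^2}{2z_1z_2}$, which is affine in $z^2$, and then takes expectations using $\mathbb{E}[z^2]=1$; this is exactly Jensen's inequality for your $G$ in the variable $u=s^2z^2$, and your Jensen phrasing only spares the need for a finite derivative at $x_0$, so the degenerate case $x_0=1$ needs no separate comment.
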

\begin{proof}
Here, $z$ follows the distribution of the absolute value of a Gaussian distribution $\mathcal{N}(0, 1)$, $z\geq 0$, $\mathbb{E}
[z]=\sqrt{2/\pi}$ and $\mathbb{E}[z^2]=1$.

The function $\arccosh(x)$ with $x>1$ is concave and monotonically increasing. 
For a concave function $b(x)$ one can write it as \[b(x)\leq b(x_0)+b'(x_0)(x-x_0).\]
Applying this for $b(x)=\arccosh(x)$ at $x=1+\frac{s^2z^2+(z_1-z_2)^2}{2z_1z_2}$ and $x_0=1+\frac{s^2+(z_1-z_2)^2}{2z_1z_2}$  we have
\[F_{z_1, z_2}(sz)\leq F_{z_1, z_2}(s)+b'(x_0)\cdot \frac{s^2(z^2-1)}{2z_1z_2}\]
Taking expectation over $z$ on both sides and recall $\mathbb{E}[z^2]=1$ we have 
\[\mathbb{E}[F_{z_1, z_2}(sz)]\leq F_{z_1, z_2}(s)+b'(x_0)\cdot \frac{s^2(\mathbb{E}[z^2]-1)}{2z_1z_2}=F_{z_1, z_2}(s)=r\]
Now we have \[p(r)=1-\mathbb{E}[F_{z_1, z_2}(sz)]/w\geq 1-r/w\]
This finishes the proof.
\end{proof}

%On the other hand, we have  $$1-\frac{F_{z_1, z_2}(s\cdot z)}{w}\leq 1-\frac{F_{z_1, z_2}(s)}{w}=1-\frac{r}{w}$$ when $z\geq 1$ and $$1-\frac{F_{z_1, z_2}(s\cdot z)}{w}\geq 1-\frac{F_{z_1, z_2}(s)}{w}=1-\frac{r}{w}$$ when $z\leq 1$. 

Thus $p_1=p(r)\geq 1-r/w$. $p_2=p(c\cdot r)\leq 1-\frac{
\alpha cr}{w}$. Since we need $p_1>p_2$, this requires that $c\geq 1/\alpha\approx 1.59$.

Now we can estimate the function $\rho=\frac{\log 1/p(r)}{\log 1/p(c\cdot r)}$. Again we apply \Cref{lem:log-ratio} and get \[\rho=\frac{\log 1/p(r)}{\log 1/p(c\cdot r)}\leq \frac{\log (1-\frac{r}{w})}{\log (1-\frac{\alpha cr}{w})}\leq \frac{1}{\alpha c}\approx \frac{1.59}{c}.\]
This concludes \Cref{thm:lsh-ub-general-d}.
%\jie{possibly add a theorem concluding LSH in $H^d$}
% \begin{theorem}
%     Given a point set $P \subseteq \mathbb{H}^d$ with $d \geq 3$, for $r>0, c>1, 0<p_2<p_1<1$, there exists an $(r, cr, p_1, p_2)$-sensitive family $\mathcal{H}$, such that $\rho(\mathcal{H}) = \frac{\log (1/p_1)}{\log (1/p_2)} \leq 1.59/c$. 
% \end{theorem}
\section{Lower Bound}
The lower bound on the performance parameter $\rho$ for LSH of $\ell_p$ metrics has been studied in~\cite{motwani2006lower,o2014optimal}. Theorem 5.2 in~\cite{o2014optimal} stated that for points of  the Hamming cube $\{0, 1\}^d$ and some $0<\delta<1$, an $(r, R, p_1, p_2)$-sensitive LSH of Hamming distance, with $r=\delta d/c$ and $R=\delta d$, must have \[\rho(c)=\frac{\log1/p_1}{\log1/p_2}\geq \frac{1}{c} -o_d(1)\]

This lower bound can be used to generate a lower bound of $1/c^p$ for $\ell_p$ distance -- by using $\ell_p$ distance on the same point set. In particular, it shows that LSH for Euclidean distances has $\rho\geq 1/c^2$. We will use the same construction to show a lower bound on $\rho$ for hyperbolic distance. 

\LowerBound*
\begin{proof}
For a point $x_i$ of the Hamming cube $\{0, 1\}^d$, we define $p_i\in \mathbb{H}^{d+1}$ to be $(z, x_i)$ in the half-space model, with a very large value of $z$ to be decided later. Now consider the hyperbolic distance of $p_i, p_j$: 
\[d_{\mathbb{H}}(p_i, p_j)=\arccosh \left(1+\frac{\|x_i-x_j\|^2}{2z^2}\right )\]
For any $0<\eps<1$, we take $z\geq \sqrt{d/2\eps}$. Thus  $\frac{\|x_i-x_j\|^2}{2z^2}\leq \eps$.
Use the generalized power series (Puiseux series) \[\arccosh(1+x)=\sqrt{2}\sum_{t=0}^{\infty}\frac{(-1)^t{2t \choose t}}{(2t+1)8^t}x^{t+1/2}=\sqrt{2x}\left(1-\frac{1}{12}x+\frac{3}{160}x^2 -\frac{5}{896}x^3+\cdots \right)\] 
%\jie{the above expansion formula is from Gemini -- not sure if we can trust it, someone should also verify}  \textcolor{red}{Feng: I checked it using Maple. The formula is correct.  On the other hand, what inequality for arccosh(1+x) are you using? We can probably give a direct proof.}\jie{I added the ineuqality below}
This gives us, for small positive $x$,  \[\sqrt{2x}(1-\frac{1}{12}x)\leq \arccosh(1+x)\leq \sqrt{2x}.\]
We can now bound 
\[\sqrt{\|x_i-x_j\|_1}  (1-\eps/12)\leq  d_{\mathbb{H}}(p_i, p_j)\cdot z\leq \sqrt{\|x_i-x_j\|_1}, \]
where the $\ell_1$ distance (same as the Hamming distance)  $\|x_i-x_j\|_1$ is the same as the squared $\ell_2$ distance $\|x_i-x_j\|_2^2$. 
%\cyd{This is not true right? $\|x_i-x_j\|_1=\|x_i-x_j\|^2$, so by the end you still get $1/c^2$}
Now we build an $(r_h, R_h, p_1, p_2)$-sensitive LSH such that $r_h\cdot z\leq \sqrt{r}$ and $\sqrt{R}(1-\eps/12)\leq R_h\cdot z$. Take $R_h=c_h\cdot r_h$, we have \[\frac{1}{c}=\frac{r}{R}\geq \left(\frac{r_h}{R_h}\right)^2(1-\eps/12)^2=\frac{1}{c_h^2}(1-\eps/12)^2\]
Combining everything, a LSH for $\mathbb{H}^{d+1}$ must have \[\rho(c_h)\geq \frac{1}{c}-o_d(1)\geq \frac{1}{c_h^2}(1-\eps/12)^2-o_d(1)=\Omega(1/c_h^2)\]
\end{proof}

\section{Experiments}

We implement our Locality Sensitive Hashing algorithms in both $\mathbb{H}^2$ and $\mathbb{H}^d$. We aim to evaluate the LSH performance in terms of $\rho$ in practice.

We first explain the synthetic data we use. For each $d \in \{2, 10, 100, 1000\}$, we generate a dataset of 1000 points by uniformly random sampling in a hyperbolic sphere of the same radius, $R = \log(199)$. This radius was chosen so that after the points are projected to $\mathbb{H}^2$ and mapped to the Poincare disk model, the points lie inside a Euclidean circle of radius 0.99. Further, we set $r = 0.2$ with varying $c = 1.5 + k$ for $k = 0,1 ... 17$ to compute the value of $p_1$ and $p_2$. We illustrate results on $p_1, p_2$ in \Cref{fig:p1p2} and $\rho$ in \Cref{fig:rho} for different choices of $c$ obtained by averaging over 1000 repetitions.

% We ran experiments by conducting LSH on data in $\mathbb{H}^d$ for $d = 2, 10, 100, 1000$. For each of the dimensions, we experimented with a set of points generated by uniformly randomly sampling 1000 points in a Hyperbolic sphere of the same radius, $\log(1.99/100)$. This radius was chosen so that after projecting down to $\mathbb{H}^2$ and mapped to the Poincare disk model, the points lied inside a circle of radius .99. After that mapping, we sampled geodesics randomly in the Poincare disk according to Crofton's formula with maximum displacement set to be .99 in Euclidean distance. Then, we placed the points into buckets according to which side of the geodesic they were on to calculate the average $p_1$ and $p_2$ probabilities. Specifically, $p_1$ was calculated by finding all pairs of points with distance less than $r = .2$, and dividing the pairs that ended up in the same bucket over all pairs. $p_2$ was found similarly by checking pairs of points with distance greater than $c\cdot r$, for different values of $c$. We did this 1000 times and averaged all of the resulting $p_1$, $p_2$ probabilities to get $p_1'$ and $p_2'$. Then, we calculate $\rho' = \frac{\log(1/p'_1)}{\log(1/p'_2)}$ from the resulting averages. We do this for each $c = 1.5 + k$ for $k = 0,1 ... 17$ which changes the $p'_2$ numbers. \Cref{fig:p1p2rho} displays the results.
\begin{figure}
    \centering
    \includegraphics[width=0.48\linewidth]{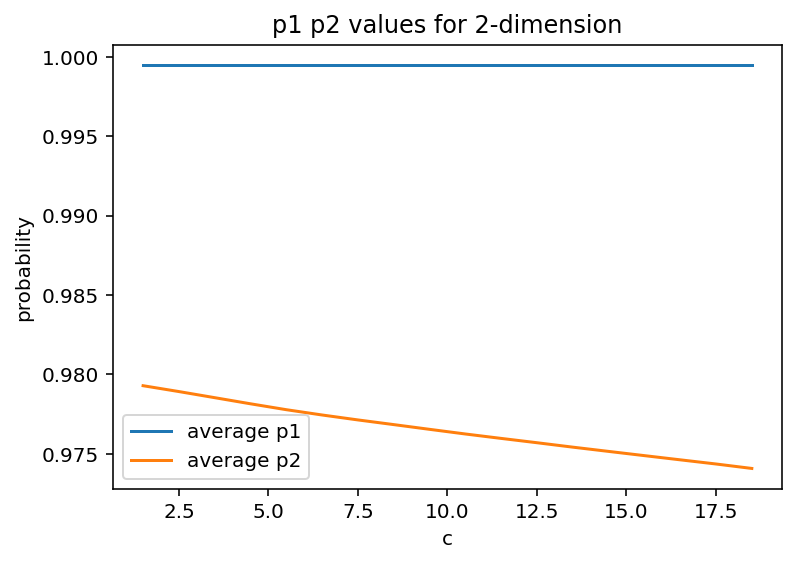}
    \includegraphics[width=0.48\linewidth]{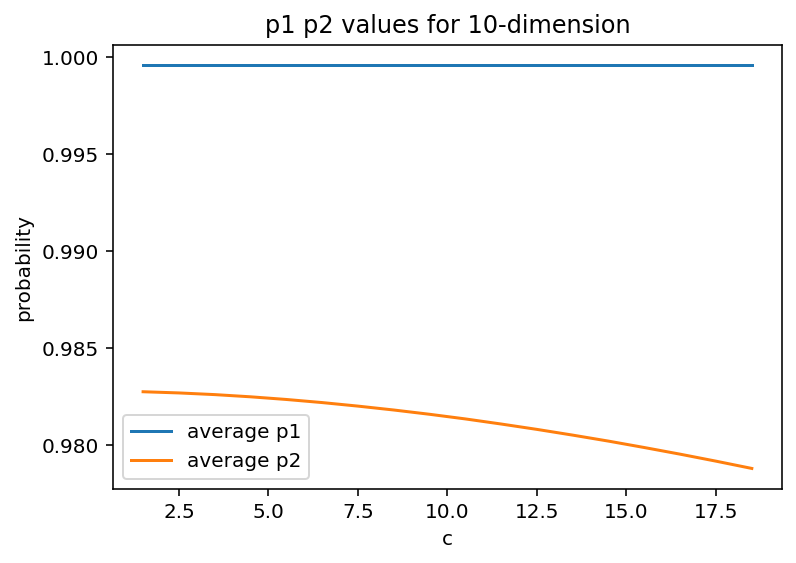}
    \includegraphics[width=0.48\linewidth]{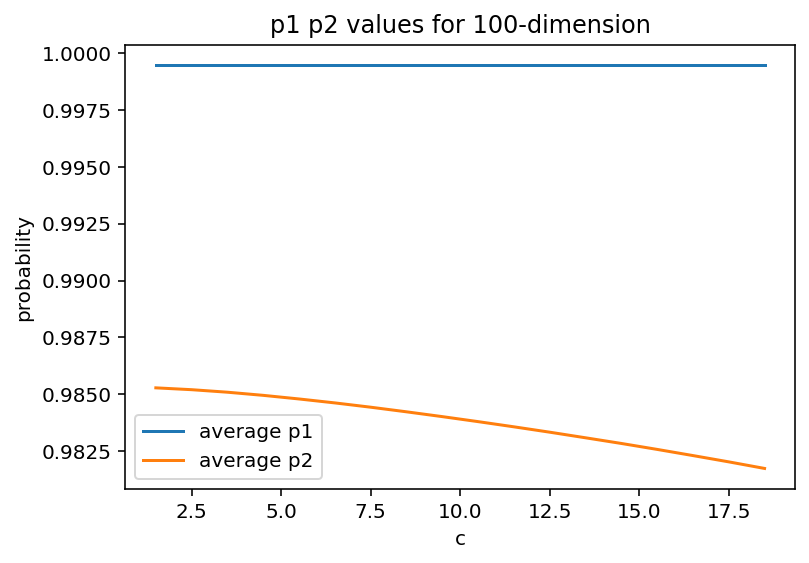}
    \includegraphics[width=0.48\linewidth]{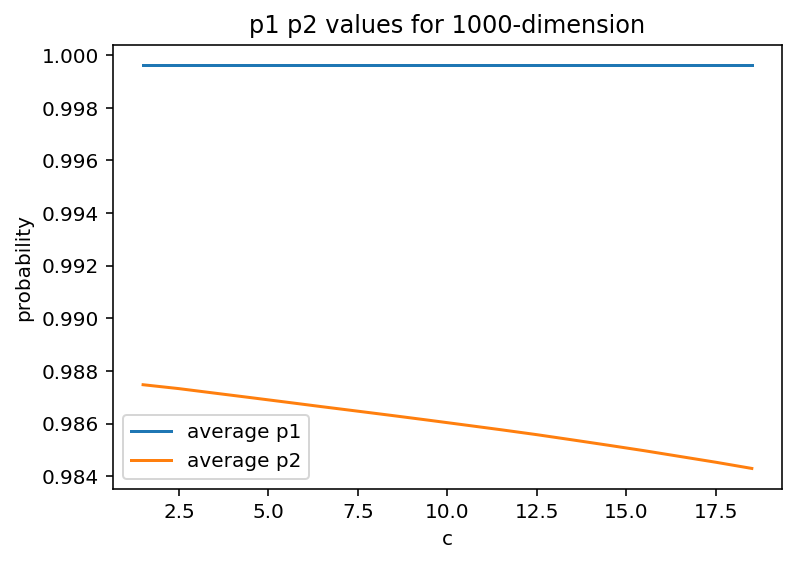}
    \caption{Average $p_1, p_2$ of Hyperbolic LSH with varying $c$ for different dimensions ($y$-values do not start at 0)}
    \label{fig:p1p2}
\end{figure}

\begin{figure}[h!]
    \centering
    \includegraphics[width=0.48\linewidth]{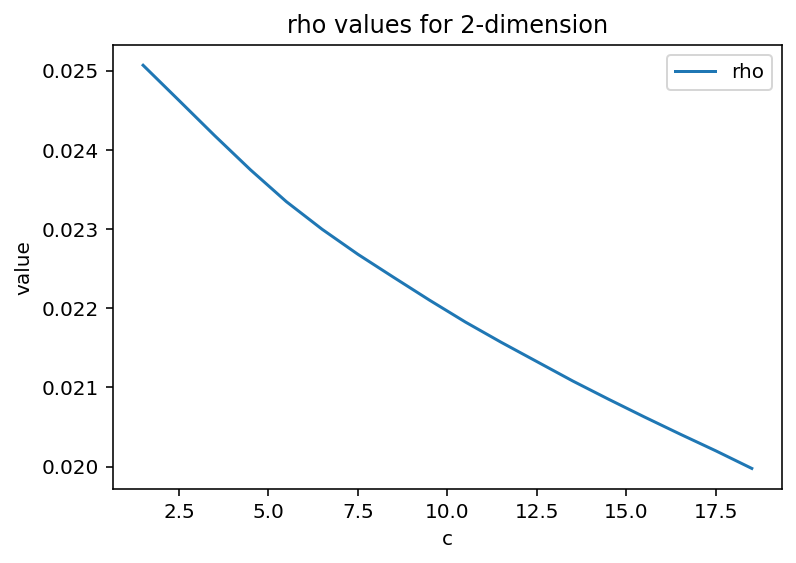}
    \includegraphics[width=0.48\linewidth]{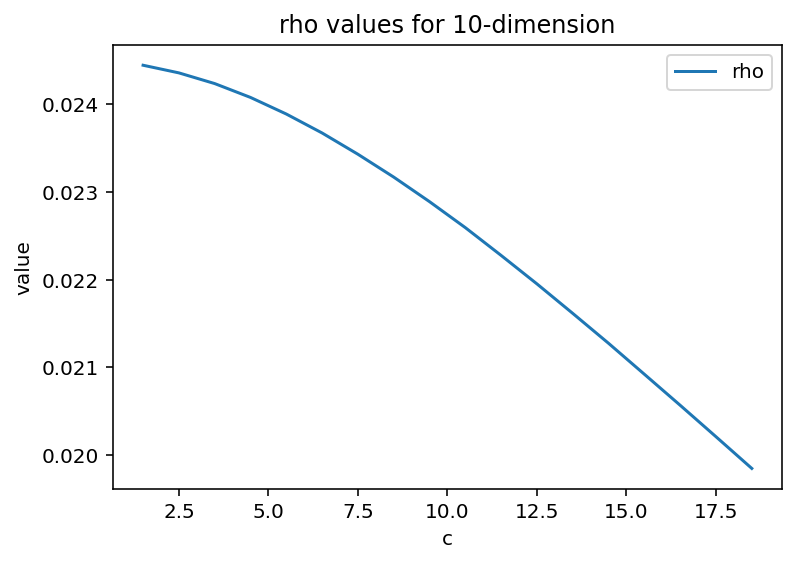}
    \includegraphics[width=0.48\linewidth]{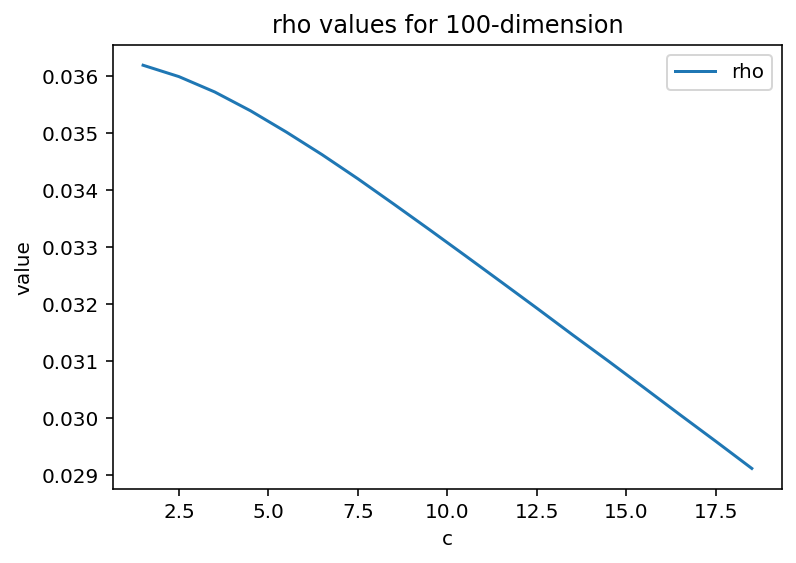}
    \includegraphics[width=0.48\linewidth]{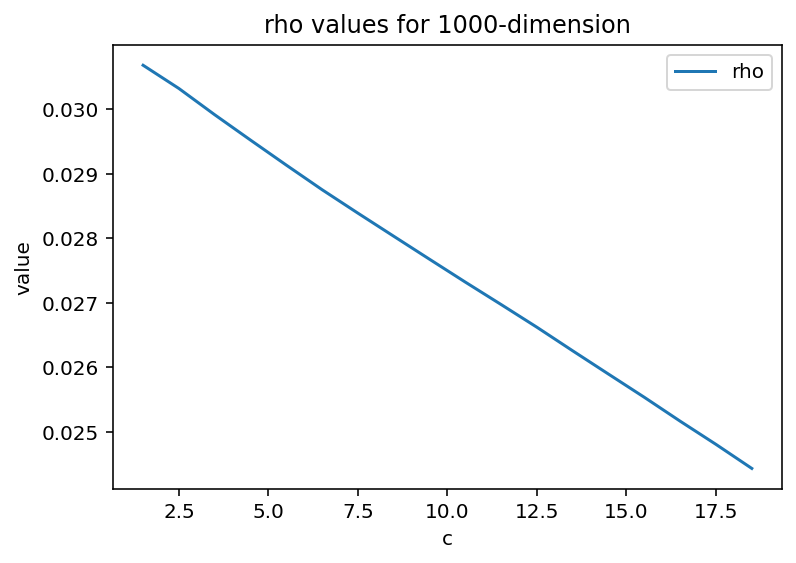}
    \caption{Average $\rho$ of Hyperbolic LSH with varying $c$ for different dimensions ($y$-values do not start at 0)}
    \label{fig:rho}
\end{figure}

\begin{figure}[h!]
    \centering
    \includegraphics[width=0.48\linewidth]{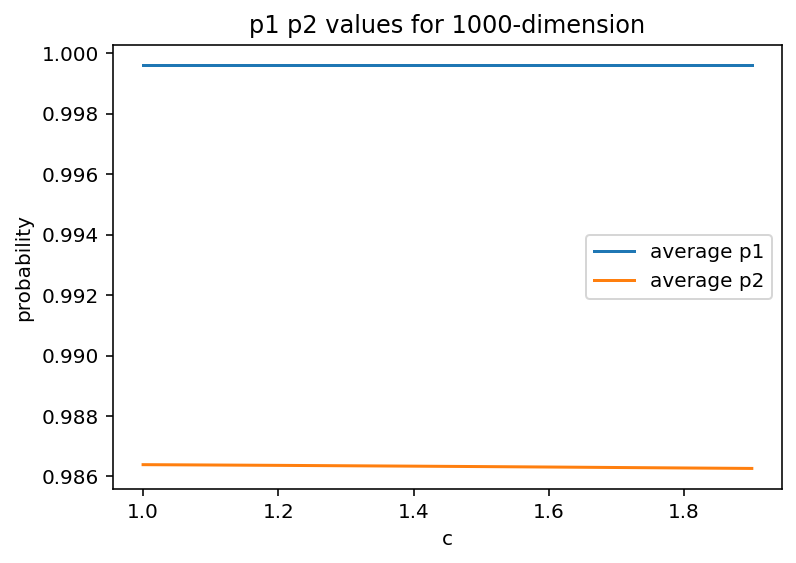}
    \includegraphics[width=0.48\linewidth]{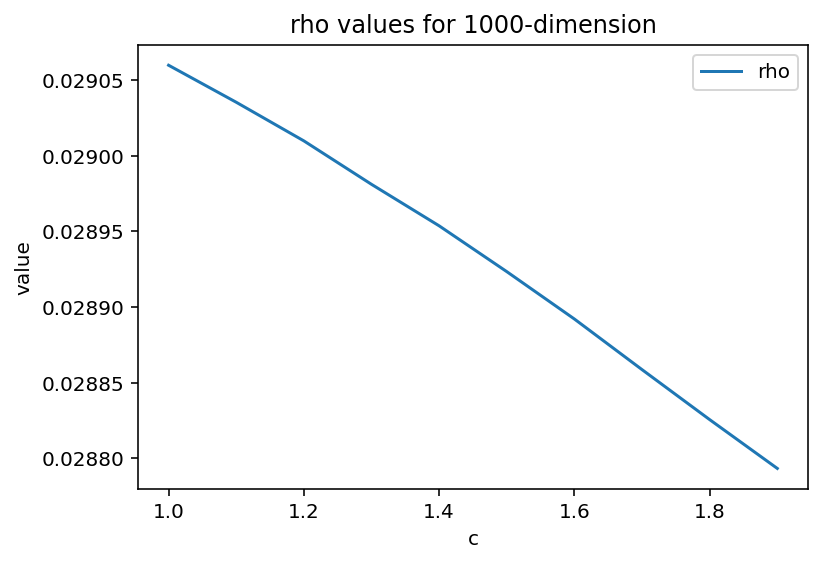}
    \caption{Average $\rho$ and average $p_1, p_2$ in high-dimension when $c$ is small ($y$-values do not start at 0)}
    \label{fig:low-c}
\end{figure}

\begin{figure}[h!]
    \centering
    \includegraphics[width=0.48\linewidth]{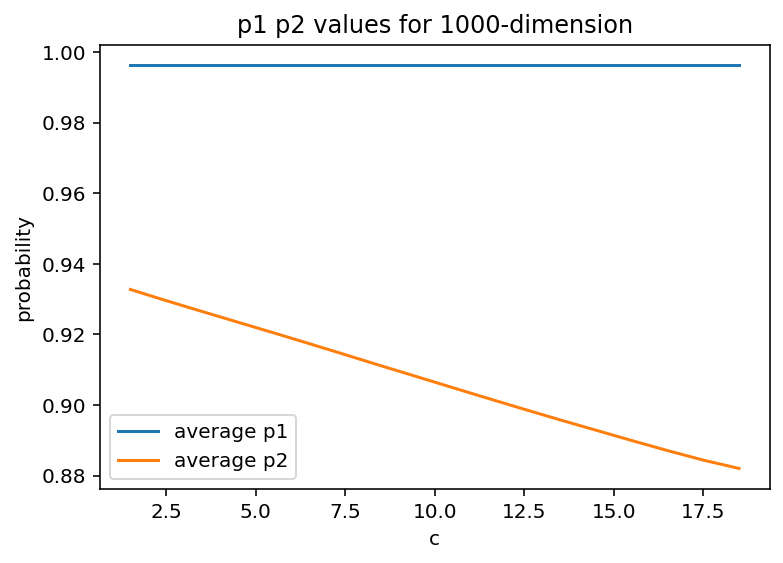}
    \includegraphics[width=0.48\linewidth]{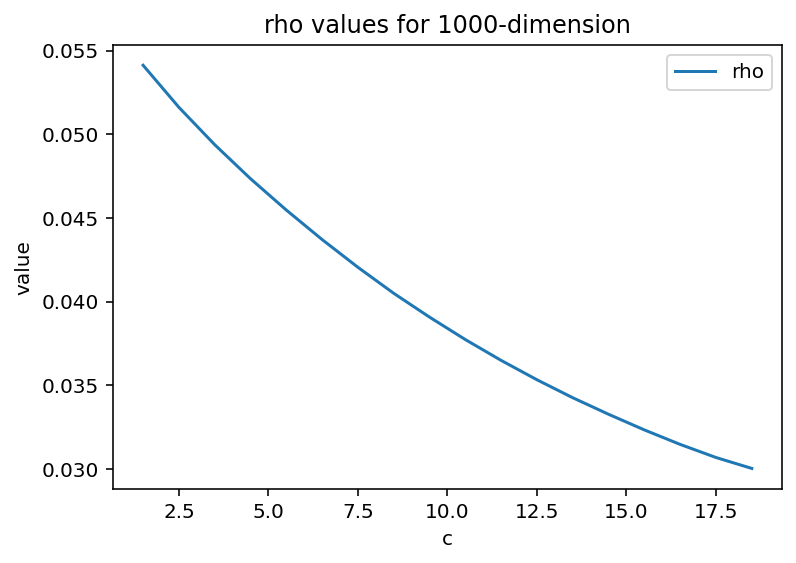}
    \caption{Average $\rho$ and average $p_1, p_2$ in high dimension and $R = .9$ in Euclidean coordinates ($y$-values do not start at 0)}
    \label{fig:boundary}
\end{figure}

\begin{figure}[h!]
    \centering
    \includegraphics[width=0.48\linewidth]{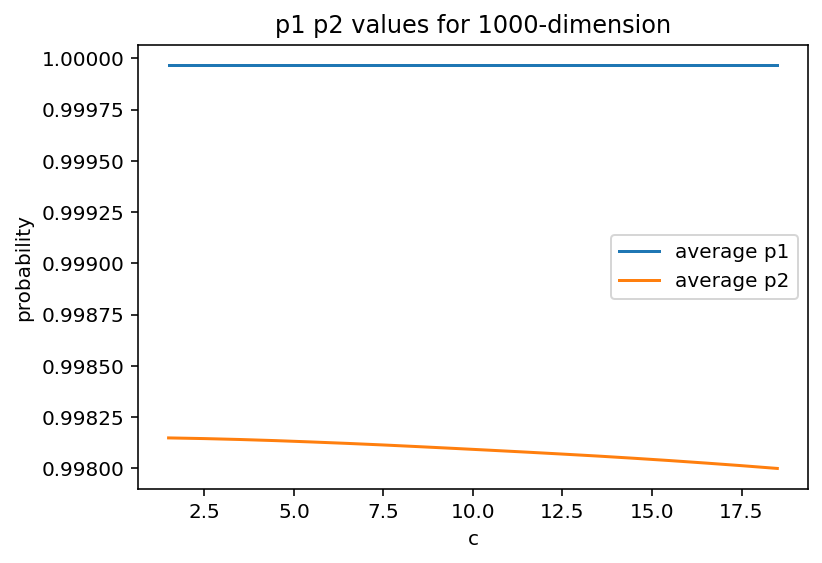}
    \includegraphics[width=0.48\linewidth]{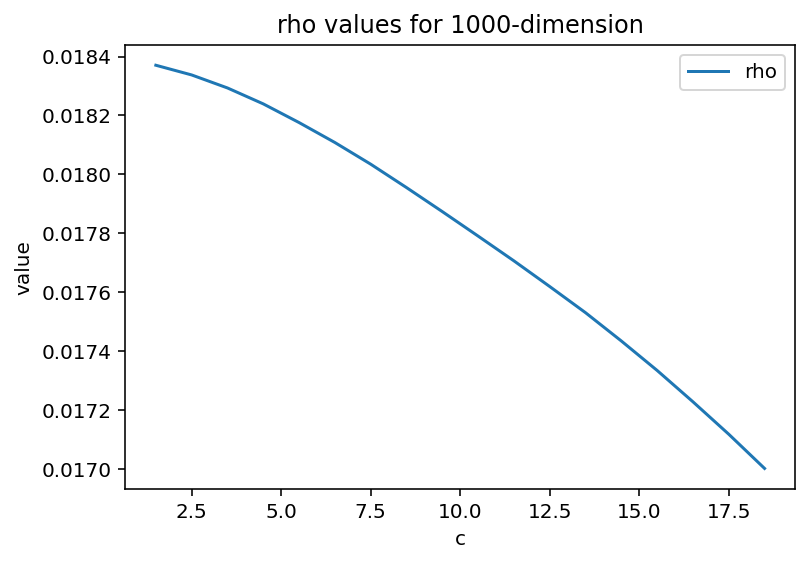}
    \caption{Average $\rho$ and average $p_1, p_2$ in high dimension and $R = .999$ in Euclidean coordinates ($y$-values do not start at 0)}
    \label{fig:boundary}
\end{figure}

\begin{figure}[h!]
    \centering
    \includegraphics[width=0.48\linewidth]{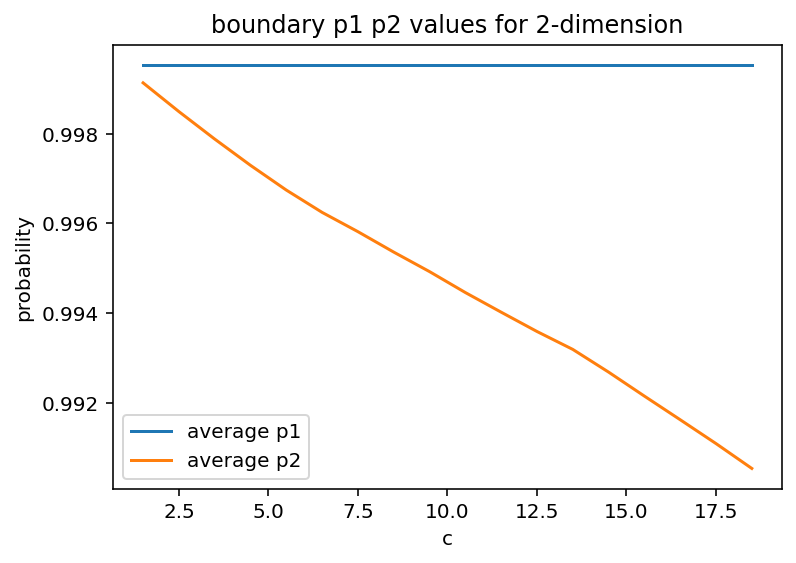}
    \includegraphics[width=0.48\linewidth]{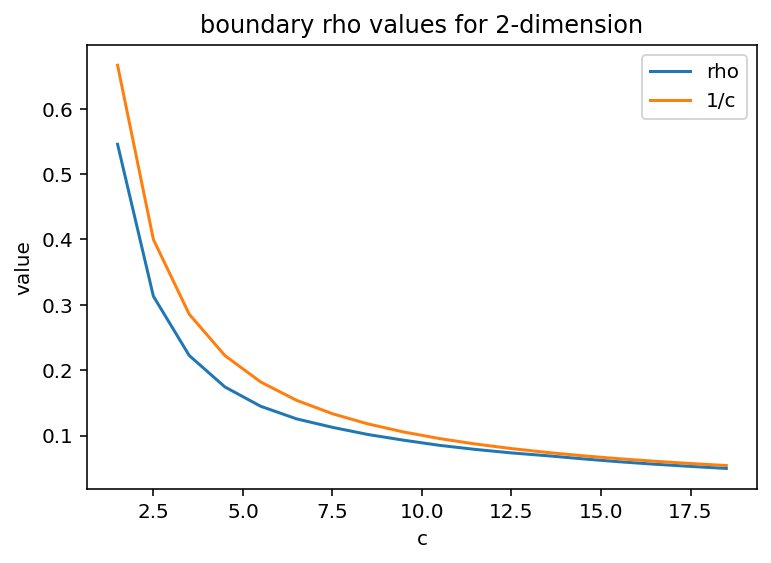}
    \caption{Average $\rho$ and average $p_1, p_2$ for only pairs at distance approximately $r$ or $cr$ ($y$-values do not start at 0)}
    \label{fig:boundary}
\end{figure}
We see that as expected, $p_1$ and $p_2$ are close to 1 in all cases. Due to the randomness of the data, the probabilities vary slightly in percentage. Since $\rho$ is related to the complements of these probabilities, varying slightly in proportion when so close to 1 causes the significant differences in $\rho$ values that we are observing. However, in all cases, the $\rho$ graphs are below $1/c$. We note that they are not exactly $1/c$ because we considered all pairs of points with distance below $r$ or above $cr$, not exactly equal to $r$ or $cr$, so $1/c$ should be an upper bound of what we obtain.

Next, we use the same data for $d = 2$, but vary $c = 1 + .1 k$ for $k = 1, ... 10$ to see if the LSH still works for small values of $c$ even though our proof only has guarantees for $c \geq 1.59$. From \Cref{fig:low-c} we see fairly similar results, just $p_2$ varying less, but LSH still achieving $\rho$ lower than $\frac{1}{c}$.

We also generated some data for $d = 1000$, where we vary $R$ so that the points lie in a circle of Euclidean radius $.9$ and $.999$. We see that $p_1, p_2$ and $\rho$ change a little, but not significantly and still fall well within the theoretical bounds.

Lastly, we generate similar data for $d = 2$, but containing 2000 points. Then, we analyze $p_1$ and $p_2$ for pairs of points that are at the boundary of acceptable distances. Precisely, we choose pairs of points that have distance within $[.9r, r]$ or $[cr, 1.1cr]$. This gives us an approximately linear graph for $p_2$ and $\rho$ curve that approximates $\frac{1}{c}$.

The experiment setup with a fixed $r$ and a reasonably large constant $c$ (rather than $1+\eps$ for a very small $\eps$) is motivated by real world applications such as similarity learning which focuses on differentiating data elements that are very similar or sufficiently far away. Many data modalities and data tasks fall into this category. For example, in image recognition the goal is to recognize two pictures with similar scenes or semantics. But for two pictures on completely different subjects, it is sufficient to know that they are far away and the exact values of dissimilarities are less important. One important take-away message from our experiments is that the empirical value of $\rho$ with the synthetic data set is very small with a mild dependency on dimension and decreasing value of $c$.  It is much smaller than our theoretical upper bound -- partly because we consider all pairs whose distance are often much smaller than $r$ or much greater than $cr$. Since the storage requirement for approximate nearest neighbor data structure built with LSH grows in terms of $O(n^{1+\rho})$ and the query cost grows in terms of $O(n^{\rho})$. The empirical small value of $\rho$  suggests great potential for practical use.

%\jie{curious about small c<1.5.}

% 
\section{Conclusion}

 To our knowledge, this is the first work for locality sensitive hashing in hyperbolic space. Our construction achieves a performance parameter $\rho=O(1/c)$ for general hyperbolic space. To complement, we show there is a lower bound of $\rho \approx \Omega(1/c^2)$. The major open question is whether there can be improvement with a better $\rho$ or show a stronger lower bound for the LSH of hyperbolic distances. Note that prior results are obtained from Boolean analysis that give lower bounds on the Hamming space, which implies a lower bound for $\ell_p$ spaces. We believe novel techniques are required to show a non-trivial lower bound for hyperbolic space. Another open direction is data-dependent LSH, which outperforms classical LSH in Euclidean and Hamming space. It would be interesting to see if this result extends to hyperbolic space.

\bibliography{reference,pq-bib}

\clearpage

\appendix
\section{Proof of Dimension Reduction in $\mathbb{H}^d$}\label{sec:JLproof}
For completeness, we include the full proof of \Cref{lem:F}.

\begin{proof}[Proof of \Cref{lem:F}]
We first prove case 1. Take $\gamma=1+\delta$. By the mean value theorem, there is a value $r\leq \hat{r}\leq (1+\delta)r$ such that \[F_{z_1, z_2}((1+\delta)r)=F_{z_1, z_2}(r)+ \frac{dF_{z_1, z_2}(\hat{r})}{dr}\cdot \delta r. \]
Now we upper bound the derivative of $F_{z_1, z_2}$:
\begin{align*}
\frac{dF_{z_1, z_2}(\hat{r})}{dr} &= \frac{2\hat{r}}{\sqrt{\hat{r}^2+|z_1-z_2|^2}\sqrt{\hat{r}^2+|z_1-z_2|^2+4z_1z_2}}\\
&\leq \frac{2}{\sqrt{\hat{r}^2+|z_1-z_2|^2+4z_1z_2}} \tag{$\hat{r}^2+|z_1-z_2|^2+4z_1z_2\geq \hat{r}^2$}\\
&\leq \frac{2}{\sqrt{r^2+|z_1-z_2|^2+4z_1z_2}} \tag{$\hat{r}\geq r$}
\end{align*}
Denote by $\Delta=F_{z_1, z_2}(r)$. Re-arrange \Cref{eq:F}, we have
\begin{equation}\label{eq:cosh}
    r^2+(z_1-z_2)^2=2z_1z_2(\cosh \Delta -1).
\end{equation}
Now we can bound
\begin{align*}
\frac{dF_{z_1, z_2}(\hat{r})}{dr} \cdot \delta r&\leq \frac{2\delta r}{\sqrt{2z_1z_2(\cosh \Delta +1)}} \tag{Plug in \Cref{eq:cosh}}\\
&\leq 2\delta \sqrt{\frac{r^2+(z_1-z_2)^2}{2z_1z_2}}\cdot \frac{1}{\sqrt{\cosh \Delta +1}} \tag{$r\leq \sqrt{r^2+(z_1-z_2)^2}$}\\
&=2\delta \sqrt{\frac{\cosh \Delta -1}{\cosh \Delta +1}}\tag{Plug in \Cref{eq:cosh}}\\
& =2\delta \tanh{\frac{\Delta}{2}}\\
&\leq \delta \Delta
\end{align*}
The last inequality applies the inequality $\tanh t \leq t$, when $t>0$. This is because of the following formula for $\tanh x$:
\[\tanh x = \frac{1}{\frac{1}{x}+\frac{1}{\frac{3}{x}+\frac{1}{\frac{5}{x}+\cdots}}}.\]
Combining everything, we have $F_{z_1, z_2}(\gamma \cdot r)\leq \gamma \cdot F_{z_1, z_2}(r)$, when $\gamma\geq 1$.

For case 2, take $\gamma=1/\gamma'$, $\gamma\geq 1$. Thus \[F_{z_1, z_2}(r)=F_{z_1, z_2}(\gamma \cdot \gamma' r)\leq \gamma \cdot F_{z_1, z_2}(\gamma'\cdot r)=\frac{1}{\gamma'} \cdot F_{z_1, z_2}(\gamma'\cdot r).\]
In other words, $F_{z_1, z_2}(\gamma' \cdot r)\geq \gamma' \cdot F_{z_1, z_2}(r)$.
\end{proof}

\section{Proof of LSH in $\mathbb{H}^2$}\label{appendix:proof-H2}

%\subsection{Proof of \Cref{lem:collision-prob-hr}}

%A technical caveat for the HR-LSH algorithm is that we require a well-defined measure on the family of all geodesics intersecting a hyperbolic ball of radius $R$. Such a measure is essential to ensure that random sampling of geodesics is meaningful and invariant under hyperbolic isometries. To this end, we appeal to a classical tool from integral geometry, namely \emph{Crofton’s formula}. It provides a profound connection between geometric quantities (such as lengths, areas, or volumes) and averages over the space of geodesics. 

\begin{proof}[Proof of Claim \ref{clm:integral}]
We begin by evaluating the inner integral with respect to $t$. The integral of $\cosh(t)$ is $\sinh(t)$.

\begin{align*}
    \int_0^{\text{arctanh}(\tanh(r/2)\cos\theta)} \cosh(t) \,dt &= \left[ \sinh(t) \right]_0^{\text{arctanh}(\tanh(r/2)\cos\theta)} \\
    &= \sinh(\text{arctanh}(\tanh(r/2)\cos\theta)) - \sinh(0) \\
    &= \sinh(\text{arctanh}(\tanh(r/2)\cos\theta))
\end{align*}

To simplify this expression, we use the identity $\sinh(\text{arctanh}(x)) = \frac{x}{\sqrt{1-x^2}}$. Let $x = \tanh(r/2)\cos\theta$. The expression becomes:
\[ \frac{\tanh(r/2)\cos\theta}{\sqrt{1 - \tanh^2(r/2)\cos^2\theta}} \]
Now, we substitute this result into the outer integral:
\[ I = \int_{-\pi/2}^{\pi/2} \frac{\tanh(r/2)\cos\theta}{\sqrt{1 - \tanh^2(r/2)\cos^2\theta}} \,d\theta \]
Let the constant $a = \tanh(r/2)$. The integrand $f(\theta) = \frac{a \cos\theta}{\sqrt{1 - a^2\cos^2\theta}}$ is an even function, since $\cos(-\theta) = \cos(\theta)$. Therefore, the integral over the symmetric interval $[-\pi/2, \pi/2]$ is twice the integral over $[0, \pi/2]$:
\[ I = 2 \int_{0}^{\pi/2} \frac{a \cos\theta}{\sqrt{1 - a^2\cos^2\theta}} \,d\theta \]
Using the identity $\cos^2\theta = 1 - \sin^2\theta$, we can rewrite the denominator:
\[ I = 2 \int_{0}^{\pi/2} \frac{a \cos\theta}{\sqrt{1 - a^2(1-\sin^2\theta)}} \,d\theta = 2 \int_{0}^{\pi/2} \frac{a \cos\theta}{\sqrt{1 - a^2 + a^2\sin^2\theta}} \,d\theta \]
We perform a substitution with $u = a\sin\theta$, so $du = a\cos\theta d\theta$. The limits of integration for $u$ are from $a\sin(0)=0$ to $a\sin(\pi/2)=a$.
\[ I = 2 \int_0^a \frac{du}{\sqrt{(1-a^2) + u^2}} \]
This is a standard integral form. Evaluating it yields:
\[ I = 2 \left[ \text{arsinh}\left(\frac{u}{\sqrt{1-a^2}}\right) \right]_0^a = 2 \, \text{arsinh}\left(\frac{a}{\sqrt{1-a^2}}\right) \]
Now, we substitute $a = \tanh(r/2)$ back into the expression. The argument of the $\text{arsinh}$ function simplifies as follows, using the identity $1 - \tanh^2(x) = \text{sech}^2(x)$:

\[ \frac{a}{\sqrt{1-a^2}} = \frac{\tanh(r/2)}{\sqrt{1-\tanh^2(r/2)}} = \frac{\tanh(r/2)}{\sqrt{\text{sech}^2(r/2)}} = \frac{\tanh(r/2)}{\text{sech}(r/2)} \]

By expressing $\tanh$ and $\text{sech}$ in terms of $\sinh$ and $\cosh$, we get:
\[ \frac{\sinh(r/2)/\cosh(r/2)}{1/\cosh(r/2)} = \sinh(r/2) \]
Substituting this simplified result back into the equation for $I$:
\[ I = 2 \, \text{arsinh}(\sinh(r/2)) = r \]
This completes the proof.
\end{proof}

\end{document}